\newtheorem{theorem}{Theorem}
\newtheorem{lemma}{Lemma}
\newtheorem{cor}{Corollary}
\newtheorem{definition}{Definition}
\newtheorem{fact}{Fact}
\newtheorem{framework}{Framework}
\newcommand{\tinyspace}{\mspace{1mu}}
\newcommand{\microspace}{\mspace{0.5mu}}
\newcommand{\norm}[1]{\left\lVert\tinyspace#1\tinyspace\right\rVert}
\newcommand{\snorm}[1]{\lVert\tinyspace#1\tinyspace\rVert}
\newcommand{\ceil}[1]{\left\lceil #1 \right\rceil}
\newcommand{\tr}{\operatorname{Tr}}
\renewcommand{\vec}{\operatorname{vec}}
\newcommand{\ip}[2]{\left\langle #1 , #2\right\rangle}
\def\({\left(}
\def\){\right)}
\def\I{\mathbb{1}}
\newcommand{\fid}{\mathcal{F}}
\newcommand{\setft}[1]{\mathrm{#1}}
\newcommand{\lin}[1]{\setft{L}\left(#1\right)}
\newcommand{\density}[1]{\setft{D}\left(#1\right)}
\newcommand{\unitary}[1]{\setft{U}\left(#1\right)}
\newcommand{\herm}[1]{\setft{Herm}\left(#1\right)}
\newcommand{\pos}[1]{\setft{Pos}\left(#1\right)}
\def \lket {\left|}
\def \rket {\right\rangle}
\def \lbra {\left\langle}
\def \rbra {\right|}
\newcommand{\ket}[1]{\lket\microspace #1 \microspace\rket}
\newcommand{\bra}[1]{\lbra\microspace #1 \microspace\rbra}
\newenvironment{mylist}[1]{\begin{list}{}{
    \setlength{\leftmargin}{#1}
    \setlength{\rightmargin}{0mm}
    \setlength{\labelsep}{2mm}
    \setlength{\labelwidth}{8mm}
    \setlength{\itemsep}{0mm}}}
    {\end{list}}
\newcommand{\class}[1]{\textup{#1}}
\newcommand{\equil}[1]{#1^\star}
\newcommand{\ceq}[1]{Equation [#1]}
\def\X{\mathcal{X}}
\def\Y{\mathcal{Y}}
\def\Z{\mathcal{Z}}
\def\A{\mathcal{A}}
\def\B{\mathcal{B}}
\def\V{\mathcal{V}}
\def\C{\mathcal{C}}
\def\M{\mathcal{M}}
\def\N{\mathcal{N}}
\begin{document}

\title{\bf Parallelized Solution to Semidefinite Programmings in Quantum Complexity Theory}

\author{%
  Xiaodi Wu \footnote{The work was completed when the author was visiting the Institute for Quantum Computing , University of Waterloo.}   \\
  \it \small Institute for Quantum Computing,  University of
  Waterloo,  Ontario, Canada, \\
  \it \small Department of Electrical Engineering and Computer Science, University of Michigan, Ann Arbor,
  USA \\
}

\date{July, 2010}

\maketitle

\begin{abstract}
In this paper we present an equilibrium value based framework for
solving SDPs via the multiplicative weight update method which is
different from the one in Kale's thesis~\cite{Kale07}. One of the
main advantages of the new framework is that we can guarantee the
convertibility from approximate to exact feasibility in a much more
general class of SDPs than previous result. Another advantage is the
design of the oracle which is necessary for applying the
multiplicative weight update method is much simplified in general
cases. This leads to an alternative and easier solutions to the SDPs
used in the previous results
\class{QIP(2)}$\subseteq$\class{PSPACE}~\cite{JainUW09} and
\class{QMAM}=\class{PSPACE}~\cite{JainJUW09}. Furthermore, we
provide a generic form of SDPs which can be solved in the similar
way. By parallelizing every step in our solution, we are able to
solve a class of SDPs in \class{NC}. Although our motivation is from
quantum computing, our result will also apply directly to any SDP
which satisfies our conditions.

In addition to the new framework for solving SDPs, we also provide a
novel framework which improves the range of equilibrium value
problems that can be solved via the multiplicative weight update
method. Before this work we are only able to calculate the
equilibrium value where one of the two convex sets needs to be the
set of density operators. Our work demonstrates that in the case
when one set is the set of density operators with further linear
constraints, we are still able to approximate the equilibrium value
to high precision via the multiplicative weight update method.

\end{abstract}

\newpage

\section{Introduction} \label{sec:introduction}
Semidefinite programming (SDP) is a relatively new field of
optimization which grew up in the
1990s~\cite{Alizadeh95,BoydV04,Lovasz03,VandenbergheB96,deKlerk02}.
Despite of the short time since its introduction, SDP has proved
useful in many different contexts. Especially, there are many
applications of SDPs in the field of theoretical computer science,
like the design of approximation algorithm~\cite{Vazirani01} and the
recent application to Unique Game
Conjecture~\cite{Steurer10,Raghavendra08} and simulating quantum
complexity
classes~\cite{Gutoski05,GutoskiW05,GutoskiW07,JainJUW09,JainUW09,KitaevW00,Watrous09b}.

SDPs are in fact a special case of conic programming and there
exists polynomial algorithms to solve any SDP instance (like the
interior point method~\cite{Alizadeh95, deKlerk02}). Thus, any
problem which can be modeled or approximated as a SDP is considered
to have efficient solutions. However, on the other side, our
understanding of SDPs is far less than our understanding of Linear
programming(LP), another typical optimization method used a lot in
practice. One thing is about the running time : although any SDP
instance can be solved in polynomial time theoretically, it is much
slower than LP's solution in practice. Moreover, the generic
algorithm for SDP are always used as a black box. Hence it is rarely
seen that we can employ the duality or the structure of SDPs while
the duality of LPs inspires lots of new algorithm designs.

A generic \emph{primal-dual} method for SDP problem (under certain
conditions) was introduced by Arora
\emph{et.al.}~\cite{AroraHK05a,AroraHK05b, AroraK07} to overcome the
difficulties mentioned above. The generic method exploits a generic
framework (or meta-algorithm) called the \emph{multiplicative weight
update method}. Similar frameworks were studied
~\cite{Fle00,FreundS99,GargK98,PlotkinST91,Khandekar04,WarmuthK06,Young95}
for different purposes before. This new generic method turns out to
be very useful and successful in several contexts. In the
paper~\cite{AroraK07} (see also Kale's Phd thesis~\cite{Kale07})
where it was originally proposed, this generic method improves the
upper bounds of running time of many approximation algorithms. In
addition to that, this generic method was introduced by Watrous
\emph{et. al.} to the field of quantum computation and successfully
proved
\class{QIP(2)}$\subseteq$\class{PSPACE}~\cite{JainUW09} and
\class{QIP}$=$\class{PSPACE}~\cite{JainJUW09}.

A \emph{semidefinite program} over $\X$ and $\Y$ (shown below) is
specified by a triple $(\Psi, A, B)$ where $\Psi : \lin{\X}
\rightarrow \lin{\Y}$ is a Hermiticity preserving super-operator and
$A\in \herm{\X}$ and $B \in \herm{\Y}$. This form which was obtained
in~\cite{Watrous08} is somehow different from but equivalent to the
standard form. Let $\alpha, \beta$ be the optimum values of the
primal and dual programs respectively. One important property of the
semidefinite program, called the \emph{duality}, implies that
$\alpha \leq \beta$ and the equality will hold in some situation.

\begin{center}
  \begin{minipage}{2in}
    \centerline{\underline{Primal problem}}\vspace{-7mm}
    \begin{align*}
      \text{maximize:}\quad & \ip{A}{X}\\
      \text{subject to:}\quad & \Psi(X) \leq B,\\
      & X\in\pos{\X}.
    \end{align*}
  \end{minipage}
  \hspace*{12mm}
  \begin{minipage}{2in}
    \centerline{\underline{Dual problem}}\vspace{-7mm}
    \begin{align*}
      \text{minimize:}\quad & \ip{B}{Y}\\
      \text{subject to:}\quad & \Psi^{\ast}(Y) \geq A,\\
      & Y\in\pos{\Y}.
    \end{align*}
  \end{minipage}
\end{center}

In practice we usually consider the SDPs whose optimum values are
within a small range. For those SDPs, instead of directly
calculating the optimum value, we usually consider the
\emph{feasibility problem} first. Once the feasibility problem is
solved, we could use the binary search in that range to find the
optimum value. For any instance of SDP and a guess value $c$ , the
feasibility problem is defined to be

\begin{center}
  \centerline{\underline{Feasibility Problem}}\vspace{-7mm}
    \begin{align*}
      \text{ask whether:}\quad & \ip{A}{X} \geq c\\
      \text{subject to:}\quad & \Psi(X) \leq B,\\
      & X\in\pos{\X}.
    \end{align*}
\end{center}

Intuitively, the primal-dual method in Kale's Thesis~\cite{Kale07}
for the feasibility problem contains the following three
ingredients. First, update via the multiplicative weight update
method. Second, the existence of an \emph{efficient width-bounded}
oracle $\mathcal{O}_1$. Finally, this method will generate an
approximately feasible solution $Y$ to the dual problem such that
$\ip{B}{Y}\leq c$ . By approximate feasibility, we mean
\begin{equation} \label{eqn:approx_exact_1}
 \forall \rho \in \density{\X}, \ip{\Psi^\ast(Y)-A}{\rho} \geq
 -\epsilon
\end{equation}
where the $\density{\X}$ denotes the set of density operators over
the space $\X$ and $\epsilon$ is some small constant. In order to
solve the feasibility problem, we need to get an exact dual feasible
solution $\tilde{Y}$ such that $\ip{B}{\tilde{Y}} \leq
(1+\varepsilon)c$ from this. We will refer this as the
\emph{convertibility from approximate to exact feasibility}. After
executing a combination of those three ingredients, this method will
either return a feasible solution $\tilde{X}$ to the primal problem
with object function value at least $c$ or a feasible solution
$\tilde{Y}$ such that $\ip{B}{\tilde{Y}} \leq (1+\varepsilon)c$. The
latter case will imply $\alpha\leq \beta\leq (1+\epsilon)c$ by the
duality of SDPs. A detailed description of this procedure can be
found in Appendix~\ref{sec:MMW_survey}.


Another important value which can be calculated via the
multiplicative weight update method is the \emph{equilibrium value}
of zero-sum games~\cite{vN28} and its generalizations (like,
\cite{Hazan06}). Particulary, we consider the value $\lambda$,
\[
  \lambda=  \min_{x \in X} \max_{y \in Y} f(x,y)= \max_{y \in Y} \min_{x \in X} f(x,y)
\]
for some \emph{convex-concave} function $f$ (see definition in
Appendix~\ref{sec:MMW_survey}) over $X \times Y$ where $X, Y$ are
convex compact sets. Again, this method involves the update via the
multiplicative weight update method and an \emph{efficient
width-bounded} but functionally different oracle $\mathcal{O}_2$ as
main ingredients. However, for equilibrium value, there is no
requirement for the convertibility from approximate to exact
feasibility. Under several other conditions as well (see details in
Theorem~\ref{thm:mmw_generic_equilibrium} and
Appendix~\ref{sec:MMW_survey}), such value $\lambda$ can be
approximated to high precision efficiently. It should be noted here
the multiplicative weight update method plays a quite different role
from the one in the solution to SDPs above. This difference has led
to an alternative and easier proof of
\class{QIP}=\class{PSPACE}~\cite{Wu10}. It was also applied to the
proof of \class{QRG(1)}$\subseteq$\class{PSPACE}~\cite{JainW09}.

One important advantage of the solutions to SDPs or equilibrium
value based on the multiplicative weight update method is that we
can easily implement the algorithm in parallel. Precisely, this is
because the fundamental operations of matrices and singular value
decomposition of matrices~\cite{vzGathen93} can be implemented with
high accuracy in \class{NC}. This trick was widely exploited in the
recent progress of quantum complexity
theory~\cite{Wu10,JainJUW09,JainUW09,JainW09}.

In this paper we will demonstrate how the equilibrium value can be
related to the feasibility problem (then, SDPs). It will then
suffice to make use of the solution to the equilibrium value to
solve the feasibility problem. Following this idea, we will provide
an alternative and easier solution to the SDPs used for
\class{QIP(2)}~\cite{JainUW09} and \class{QMAM}~\cite{JainJUW09}.
Moreover, we provide a generic form and conditions under which any
feasibility problem can be solved in the same way. By parallelizing
each step in the solution, we are able to show any feasibility
problem can be solved in \class{NC} as well. Precisely, we consider
the feasibility problem as follows.

\begin{center}
  \centerline{\underline{Feasibility Problem}}\vspace{-6mm}
    \begin{align*}
      \text{ask whether:}\quad & \ip{A}{X} \geq c\\
      \text{subject to:}\quad & \Psi(X) \leq B,\\
      & X\in\density{\X}.
    \end{align*}
\end{center}

This feasibility problem is very similar to the most general version
above. The only difference is we replace the condition $X \in
\pos{\X}$ by $X \in \density{\X}$. This change corresponds to the
trace bound $\tr X =R$ for some const $R$ which commonly appears in
applications of SDPs. Under certain conditions our constraint on $X
\in \density{\X}$ is equivalent to the trace bound $\tr X \leq R$.
This implies the feasibility problem in our consideration is very
general and could cover many instances in practical use. Although
our original motivation is from quantum computation, our result also
works well for any SDPs which can be converted to the form in our
consideration.

The concept of equilibrium values will be related if we imagine a
two-player game to solve the feasibility problem. Assume there is a
primal player who wants to provide you a feasible solution $X$ to
prove the original problem is feasible. On the contrary, the dual
player who wants to disprove the feasibility will try to find where
the constraints on $X$ are violated. This is different from Kale's
method where the disproof of the feasibility is by getting some
feasible solution to the dual problem and then making use of the
duality of SDPs. Precisely, we define the following convex-concave
function $f$ to capture the two-player game.

\begin{framework} \label{framework_1}
Let function $f$ be
\begin{equation} \label{eqn:framework1}
 f(X,\Pi)= \ip{\left(
                                                          \begin{array}{cc}
                                                           c- \ip{A}{X}   &  \\
                                                              &  \Psi(X)-B \\
                                                          \end{array}
                                                        \right)
 }{\Pi}
\end{equation}
over the set $\density{\X} \times T$ where $T=\{ \Pi: 0\leq \Pi \leq
\I_{\Y \oplus \mathbb{C}} \}$. Let the equilibrium value
$\equil{\lambda}$ be
\[
 \equil{\lambda}=\min_{X \in \density{X}}\max_{\Pi \in T} f(X, \Pi)=\max_{\Pi \in T}\min_{X \in \density{X}} f(X, \Pi)
\]
\end{framework}

The relation between the equilibrium value $\equil{\lambda}$ and the
feasibility of the original problem is captured by the following
theorem.

\begin{theorem} \label{thm:equilibrum_feasibility}
The original problem is feasible if and only if $\equil{\lambda}
\leq 0$.
\end{theorem}

This formulation is inspired by similar formulations~\cite{Hazan06}
used for convex optimization and
Theorem~\ref{thm:equilibrum_feasibility} follows easily from the
argument in~\cite{Hazan06}. Given the fact that equilibrium value
$\equil{\lambda}$ can only be calculated approximately, we still
need to do the conversion from approximately to exactly feasible
solutions. Due to this reason, we make an important change to the
old formulation in~\cite{Hazan06}. Namely, we choose $T$ to be $\{
\Pi: 0\leq \Pi \leq \I_{\Y \oplus \mathbb{C}} \}$ rather than the
set of density operators over the space $\Y \oplus \mathbb{C}$.
Assume we approximate $\equil{\lambda}$ to precision $\epsilon$ and
the return value implies $\equil{\lambda}$ is in an interval
containing 0. Or equivalently, we encounter the situation
\begin{equation} \label{eqn:approx_exact_2}
\forall \quad \Pi \in T, \ip{\left(
                                                          \begin{array}{cc}
                                                           c-\ip{A}{X}   &  \\
                                                              &  \Psi(X)-B \\
                                                          \end{array}
                                                        \right)
 }{\Pi}\leq \epsilon
\end{equation}
and we want to convert $X$ into some exact feasible solution without
changing the object function a lot. This is very similar to the
situation captured by \ceq{\ref{eqn:approx_exact_1}}. However, the
difficulty in making this conversion happen is different. By
applying the approximate feasible result in
\ceq{\ref{eqn:approx_exact_1}}, we can only expect to solve the
problems~\cite{JainUW09,JainJUW09} when $A=\I_\X$ efficiently in
general. This is because the density operator $\rho$ in
\ceq{\ref{eqn:approx_exact_1}} implies the approximate feasibility
only holds in term of $\mathcal{L}_\infty$ norm. On the other side,
in order to remain the object function which is a inner product
almost unchanged, the approximate feasibility needs to hold in a
stronger sense like $\mathcal{L}_1$ norm. Otherwise only trivial
cases like $A=\I_\X$ can be solved.

Our new approximate feasibility result in
\ceq{\ref{eqn:approx_exact_2}} overcomes such difficulty and
provides a method to do the conversion from the approximate to exact
feasibility in much more general cases. Especially we will
demonstrate when the super-operator $\Psi$ is \emph{partial trace}
or its generalizations, our approximate feasibility result works
very well to make the conversion happen. This kind of constraints is
very powerful because it is the only type of constraints we need in
many SDPs in quantum computation. In addition to that, we still have
the freedom to choose $T$ in order to  meet the requirement of new
types of constraints. Such freedom is a huge advantage over the
primal-dual method since \ceq{\ref{eqn:approx_exact_1}} is the only
result can be expected from the primal-dual method.

Another advantage of the Framework~\ref{framework_1} is that the
oracle $\mathcal{O}_2$ which will be required to compute the
equilibrium value $\equil{\lambda}$ can be easily designed.
Precisely, since $T=\{ \Pi: 0\leq \Pi \leq \I_{\Y \oplus \mathbb{C}}
\}$, we can simply get the other part's spectral decomposition and
let $T$ be the projection onto the positive eigenspace of it.
Finally, under the conditions of
Theorem~\ref{thm:mmw_generic_equilibrium}, we can implement the
whole algorithm in \class{NC}.

Besides the application to the feasibility problem, the idea of
Framework~\ref{framework_1} can also improve our ability on
calculating equilibrium over some non-density operator set. Consider
some convex-concave function $h$ over $X \times Y$ where $X$ is the
set of density operators but with some constraint $\Phi$. Precisely,

\begin{framework} \label{framework_2}
If we consider the equilibrium value $\mu$,
\[
  \mu=\min_{x \in X} \max_{y \in Y} h(x,y)=\max_{y \in Y} \min_{x
  \in X } h(x,y)
\]
where $X =\{\rho : \rho \in \density{\X}, \Phi(\rho) \leq B\}$ and
$Y$ is any other compact convex set. It will be useful to consider
the equilibrium value $\mu'$ as follows
\[
  \mu'=\min_{x \in X'} \max_{\{y,\Pi\} \in Y'} h(x,y)+\alpha
  \ip{\Phi(\rho)-B}{\Pi}= \max_{\{y,\Pi\} \in Y'} \min_{x \in X'} h(x,y)+\alpha
  \ip{\Phi(\rho)-B}{\Pi}
\]
where $X=\density{\X}$ , $Y'=Y \times T$ ($T=\{ \Pi: 0\leq \Pi \leq
\I\}$) and $\alpha$ is any factor.
\end{framework}

It is easy to see that the idea of Framework~\ref{framework_2} or
especially the term $\ip{\Phi(\rho)-B}{\Pi}$ is to penalize when
$\rho$ does not satisfy the constraint $\Phi(\rho)\leq B$.
Furthermore, the penalization is weighted according to the factor
$\alpha$. By using the game value of one restricted model of
one-round quantum refereed game as an example, we can demonstrate
when there are two promise values of $\mu$ with large gap, such gap
can be transferred to the new value $\mu'$. Thus, it will be
sufficient to calculate the value of $\mu'$ in order to distinguish
between two promises of the original value $\mu$. This improves the
range of problems which can be solved by the multiplicative weight
update method since so far we can only calculate the equilibrium
value when $X$ is the set of density operators (up to a factor).
This will also give a binary search method for calculating the
equilibrium value to some precision by artificially assuming two
promise values.

The rest of the paper is organized as follows. Most of the
preliminaries can be found in Appendix~\ref{sec:appendix_prelim}. We
also leave the lemmas and theorems which will be directly used in
Section~\ref{sec:prelim}. The two examples for
Framework~\ref{framework_1} will be demonstrated in
Section~\ref{sec:qip2} (\class{QIP(2)}) and in
Appendix~\ref{sec:qmam} (\class{QMAM}) respectively. One restricted
version of one-round quantum refereed game will be discussed in
Section~\ref{sec:qrg2} to demonstrate the power of
Framework~\ref{framework_2}. We will conclude the paper with further
discussions and open problems in Section~\ref{sec:conclusion}.

There are two points to make clear before the readers move on to the
next section. First, when we consider the feasibility problem or
equilibrium value directly, we always refer the size of the SDP or
the function with equilibrium value as the input size. However, when
we consider the quantum complexity classes, the SDP or the function
with equilibrium value will have exponential size in term of its
actual input size $|x|$. Second, we will not take care of the
precision issues with the \class{NC} implementation in the main part
of this paper. Instead, we will assume such implementation can be
made exactly and deal with them in
Appendix~\ref{sec:precision_issue}.

\section{Preliminaries} \label{sec:prelim}
In order to make this paper self-contained, we try to provide brief
surveys on each topic related to our paper. However, most of them
will be put in the appendix due to limited space. Precisely, we will
introduce the fundamentals of quantum information in
Appendix~\ref{sec:prelim_fundamentals_qi}. Useful facts on
\class{NC} and parallel matrix operations are stated in
Appendix~\ref{sec:NC_parellel}. The multiplicative weight update
method and its application to calculating the equilibrium value and
SDPs (the primal-dual method) are surveyed in
Appendix~\ref{sec:MMW_survey}.

\subsection{Useful Lemma and Facts in Quantum Information}
\label{sec:lemma_facts}

 The following are some important lemmas
about purification and fidelity which are useful in our proof later.
The proofs of following results are put in
Appendix~\ref{sec:prelim_fundamentals_qi}.

\begin{lemma} \label{lemma:purification_fidelity}
Given any two density operators $\rho_1,\rho_2$ over the space $\A$,
and another density operator $\sigma_1$ over the space $\A \otimes
\B$ such that $\tr_B \sigma_1= \rho_1$, then there exists another
density operator $\sigma_2$ over the space $\A \otimes  \B$ for
which that $\tr_B \sigma_2 =\rho_2$ and
$\mathcal{F}(\rho_1,\rho_2)=\mathcal{F}(\sigma_1,\sigma_2)$.
\end{lemma}

\begin{lemma} \label{lemma:compute_purification}
In addition to the result in
Lemma~\ref{lemma:purification_fidelity}, we can compute the
classical representation of $\sigma_2$ as required above given the
classical representations of $\rho_1,\rho_2$ and $\sigma_1$ in
\class{NC} where the input size refers to the size of the matrices.
\end{lemma}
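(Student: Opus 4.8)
The plan is to show that the $\sigma_2$ produced by the (constructive) proof of Lemma~\ref{lemma:purification_fidelity} is obtained from $\rho_1,\rho_2,\sigma_1$ by a fixed, constant-length sequence of standard matrix operations, each of which lies in \class{NC}. Recall the shape of that construction: first purify $\sigma_1$ to a pure state on $\A\otimes\B\otimes\C$ (which is simultaneously a purification of $\rho_1$ with purifying register $\B\otimes\C$), then invoke the standard (Uhlmann) argument to rotate the purifying register so as to obtain a purification of $\rho_2$ that attains $\F(\rho_1,\rho_2)$, and finally trace out $\C$. I would make each of these three steps explicit as an operation on classical matrix representations and then appeal to the parallel matrix primitives collected in Appendix~\ref{sec:NC_parellel}.

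Concretely, first I would compute a canonical purification of $\sigma_1$, e.g.\ the vectorization $\vec(\sqrt{\sigma_1})$; this requires only a spectral decomposition (equivalently an SVD) of $\sigma_1$ to extract $\sqrt{\sigma_1}$, followed by a reshaping, both available in \class{NC}. Second, for the Uhlmann step I would write the optimal purification of $\rho_2$ in closed form: fixing the purification of $\rho_1$ induced above, the matching purification of $\rho_2$ is determined by the partial isometry arising from the polar decomposition of a product of the operators $\sqrt{\rho_1}$, $\sqrt{\rho_2}$, and the operator relating $\sigma_1$'s purification to $\rho_1$. Since the polar decomposition reduces to an SVD, and the remaining ingredients are square roots and matrix products, this step is again in \class{NC}. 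Third, $\sigma_2=\tr_\C(\,\cdot\,)$ of the resulting pure state is a single partial trace, which is an \class{NC} operation on the matrix representation.

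It then remains to compose these pieces. Each primitive---eigendecomposition/SVD, matrix square root, polar decomposition, matrix multiplication, tensor product, and partial trace---is in \class{NC} by Appendix~\ref{sec:NC_parellel}; the matrix dimensions stay polynomial in the input size (the output $\sigma_2$ acts on $\A\otimes\B$, and the auxiliary register $\C$ can be taken no larger than $\A\otimes\B$); and \class{NC} is closed under composition of a constant number of such steps; hence the whole procedure is in \class{NC}. One also has to verify that the closed-form $\sigma_2$ coincides with the one guaranteed by Lemma~\ref{lemma:purification_fidelity}, i.e.\ that $\tr_\B\sigma_2=\rho_2$ and $\F(\sigma_1,\sigma_2)=\F(\rho_1,\rho_2)$ hold exactly for the computed object; this is immediate from the correctness of the underlying construction once the operations are identified.

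The main obstacle I anticipate is rank-deficiency. When $\rho_1$, $\rho_2$, or $\sigma_1$ is singular, the square roots and the polar-decomposition ``unitary'' are only partial isometries supported on their ranges, and one must complete them consistently (and choose the dimension of $\C$ appropriately) so that the marginal condition $\tr_\B\sigma_2=\rho_2$ holds on the nose rather than approximately. Closely related is the numerical sensitivity of the SVD near vanishing singular values, which governs the accuracy of $\sqrt{\,\cdot\,}$ and of the polar factor. Following the convention announced in the introduction, I would treat these matrix primitives as exact in the present argument and defer the quantitative precision analysis---how many bits of precision and how small an error the \class{NC} circuits actually incur---to Appendix~\ref{sec:precision_issue}.
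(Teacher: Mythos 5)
Your proposal follows essentially the same route as the paper's proof: both trace through the constructive proof of Lemma~\ref{lemma:purification_fidelity}, reducing each step (extracting $\sqrt{\sigma_1}$ and reshaping to get the canonical purification $\vec(\sqrt{\sigma_1})$, obtaining the Uhlmann rotation $V$ and the relating operator $U$ via SVD/polar decomposition and pseudo-inverse of $\sqrt{\rho_1}$, then forming $\sigma_2=\tr_\C\left(\vec(\sqrt{\rho_2}VU^\ast)\vec(\sqrt{\rho_2}VU^\ast)^\ast\right)$) to the \class{NC} matrix primitives of Appendix~\ref{sec:NC_parellel} and composing. Your explicit attention to rank-deficiency (the paper handles this implicitly by allowing a pseudo-inverse) and the deferral of precision to Appendix~\ref{sec:precision_issue} are both consistent with the paper's treatment.
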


\begin{lemma} \label{lemma:trace_bound}
Given two density operators $\rho_1, \rho_2 \in \density{\A}$, and
their purifications $\sigma_1, \sigma_2 \in \density{\A \otimes \B}$
in space $\A \otimes \B$ respectively, for which
$\fid(\rho_1,\rho_2)=\fid(\sigma_1, \sigma_2)$, let
\[
  s= \frac{1}{2}\snorm{\rho_1-\rho_2}_1 \text{ and }
  t=\frac{1}{2}\snorm{\sigma_1-\sigma_2}_1
\]
Then we have the following inequalities
\[
  1-s\leq \sqrt{1-t^2} \text {  and } 1-t \leq \sqrt{1-s^2}
\]
\end{lemma}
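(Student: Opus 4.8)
The plan is to derive both inequalities directly from the Fuchs--van de Graaf inequalities, which relate the trace distance and the fidelity of any pair of density operators. Recall that for arbitrary $\rho,\sigma\in\density{\Z}$ one has
\[
 1-\fid(\rho,\sigma)\leq\tfrac{1}{2}\snorm{\rho-\sigma}_1\leq\sqrt{1-\fid(\rho,\sigma)^2}.
\]
The only structural input I need from the hypotheses is the single scalar equality $\fid(\rho_1,\rho_2)=\fid(\sigma_1,\sigma_2)$; denote this common value by $F\in[0,1]$. Everything else is a matter of chaining the two one-sided bounds applied to the two different pairs through this common value.

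First I would establish the inequality $1-t\leq\sqrt{1-s^2}$. Applying the upper Fuchs--van de Graaf bound to the pair $(\rho_1,\rho_2)$ gives $s\leq\sqrt{1-F^2}$, which rearranges to $F\leq\sqrt{1-s^2}$ after squaring (legitimate since both sides are nonnegative). Applying the lower Fuchs--van de Graaf bound to the purifications $(\sigma_1,\sigma_2)$ gives $1-F\leq t$, i.e. $1-t\leq F$. Chaining these two estimates yields $1-t\leq F\leq\sqrt{1-s^2}$, which is the second claimed inequality.

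The first inequality $1-s\leq\sqrt{1-t^2}$ follows by the symmetric argument with the roles of $(\rho_1,\rho_2)$ and $(\sigma_1,\sigma_2)$ interchanged: the lower bound on $(\rho_1,\rho_2)$ gives $1-s\leq F$, while the upper bound on $(\sigma_1,\sigma_2)$ gives $t\leq\sqrt{1-F^2}$, hence $F\leq\sqrt{1-t^2}$, and chaining gives $1-s\leq F\leq\sqrt{1-t^2}$.

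There is essentially no hard step here: once the Fuchs--van de Graaf inequalities are in hand, the result is pure bookkeeping. The only point that warrants care is the convention for fidelity --- the argument above uses the \emph{linear} fidelity normalized so that $\fid(\rho,\rho)=1$ and $0\leq\fid\leq1$, for which the stated form of the inequalities holds; were $\fid$ to denote the squared quantity, the exponents would need adjusting. I would therefore make the convention explicit at the outset and, if self-containment is desired, include a one-line justification of the two Fuchs--van de Graaf bounds (the lower bound is immediate from the variational characterization of the trace norm, and the upper bound follows from Uhlmann's theorem reducing to the pure-state case). I note that the equality of fidelities guaranteed by the preceding purification lemmas is precisely what licenses treating $F$ as a single quantity shared by both pairs, so no further hypothesis on the purifications is needed.
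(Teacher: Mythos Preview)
Your proposal is correct and follows essentially the same approach as the paper: apply the Fuchs--van de Graaf inequalities to each pair to get $1-s\leq F\leq\sqrt{1-s^2}$ and $1-t\leq F\leq\sqrt{1-t^2}$, then chain through the common fidelity $F=\fid(\rho_1,\rho_2)=\fid(\sigma_1,\sigma_2)$. The paper's proof is the same argument stated in two lines.
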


\begin{lemma} \label{lemma:admissable_connect}
Given two density operators $\rho_1,\rho_2 \in \density{\A \otimes
\B}$ where $\rho_1$ represents a pure state, there exists an
admissable quantum channel $\Phi : \lin{\A} \rightarrow \lin{\A}$
such that $\Phi \otimes \I_{\lin{\B}}(\rho_1)=\rho_2$ if and only if
$\tr_\A (\rho_1) =\tr_\A (\rho_2)$.
\end{lemma}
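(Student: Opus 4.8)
The plan is to prove the two directions separately; necessity is immediate and does not even use purity, while sufficiency carries all the content. For necessity, suppose such a $\Phi$ exists. Since $\Phi$ acts only on $\A$ and is trace preserving, one checks on product operators $M = P \otimes Q$ that $\tr_\A\big((\Phi \otimes \I_{\lin{\B}})(P \otimes Q)\big) = \tr(\Phi(P))\, Q = \tr(P)\, Q = \tr_\A(P\otimes Q)$, and extends by linearity to conclude $\tr_\A \circ (\Phi \otimes \I_{\lin{\B}}) = \tr_\A$ as maps on $\lin{\A \otimes \B}$. Applying this to $M=\rho_1$ gives $\tr_\A(\rho_2) = \tr_\A\big((\Phi \otimes \I_{\lin{\B}})(\rho_1)\big) = \tr_\A(\rho_1)$.

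For sufficiency I would argue by a purification construction. Set $\rho_B = \tr_\A(\rho_1) = \tr_\A(\rho_2)$ and write the pure state $\rho_1 = \ket{\psi}\bra{\psi}$ with $\ket{\psi} \in \A \otimes \B$, so that $\ket{\psi}$ is a purification of $\rho_B$ with purifying system $\A$. Next I would purify $\rho_2$ as well: introduce an auxiliary space $\E$ and a unit vector $\ket{\chi}$ with $\tr_\E(\ket{\chi}\bra{\chi}) = \rho_2$; tracing out $\A$ too shows $\ket{\chi}$ is a purification of the same $\rho_B$, now with purifying system $\A \otimes \E$. The heart of the argument is the unitary freedom of purifications: two purifications of one state differ only by an isometry on the purifying system. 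Since $\dim(\A \otimes \E) \geq \dim \A$, this yields an isometry $W : \A \to \A \otimes \E$ with $(W \otimes \I_{\lin{\B}})\ket{\psi} = \ket{\chi}$. I would then define $\Phi : \lin{\A} \to \lin{\A}$ by $\Phi(X) = \tr_\E(W X W^{\ast})$; because $W^{\ast}W = \I_\A$, this $\Phi$ is completely positive and trace preserving, hence admissible.

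Finally I would verify the required identity directly:
\[
 (\Phi \otimes \I_{\lin{\B}})(\rho_1) = \tr_\E\big((W \otimes \I_{\lin{\B}})\ket{\psi}\bra{\psi}(W^{\ast} \otimes \I_{\lin{\B}})\big) = \tr_\E(\ket{\chi}\bra{\chi}) = \rho_2 ,
\]
which completes the construction.

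The step I expect to be the main obstacle, and the only place where the hypothesis that $\rho_1$ is pure is genuinely used, is invoking the freedom of purifications to produce $W$: this requires $\rho_1 = \ket{\psi}\bra{\psi}$ to be an honest purification of $\rho_B$. Some care is also needed in the dimension bookkeeping, since the two purifying systems $\A$ and $\A \otimes \E$ have different sizes; I would handle this by embedding $\A$ into $\A \otimes \E$ via $\ket{a} \mapsto \ket{a}\otimes\ket{0}_\E$ and reading off $W$ from the unitary version of the freedom statement on the common system $\A \otimes \E$ (together with a harmless reordering of the $\E$ and $\B$ tensor factors). That purity cannot be dropped is confirmed by a counterexample: a classically correlated (separable) $\rho_1$ can share its $\B$-marginal with a pure entangled $\rho_2$, yet a channel of the form $\Phi \otimes \I_{\lin{\B}}$ acting only on $\A$ can never create entanglement, so no such $\Phi$ exists despite the equal marginals.
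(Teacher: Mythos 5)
Your proof is correct and follows exactly the route the paper indicates: the paper's own ``proof'' is a one-line remark that the lemma ``follows from the unitary equivalence of purifications and the fact $\rho_1$ is a pure state,'' which is precisely the argument you carry out in detail (isometry $W:\A\to\A\otimes\E$ from purification freedom, then $\Phi(X)=\tr_\E(W X W^{\ast})$). Your writeup is in fact more complete than the paper's, including the easy trace-preservation direction and the counterexample showing purity of $\rho_1$ cannot be dropped.
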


\begin{lemma} \label{lemma:trace_norm}
Given any Hermitian operator $A$ such that $\tr A=0$, then we have
\[
  \max_{\Pi: 0\leq \Pi \leq \I} \ip{A}{\Pi}=\frac{1}{2}\snorm{A}_1 \text{  and }  \min_{\Pi: 0\leq \Pi \leq \I} \ip{A}{\Pi}=-\frac{1}{2}\snorm{A}_1
\]
\end{lemma}

\subsection{Multiplicative Weights Update Method}
\label{sec:MMW_generic_equilibrium}

The detailed discussion of the multiplicative weight update method
is provided in Appendix~\ref{sec:MMW_survey}. However, we will
demonstrate the particular algorithm to calculate the equilibrium
value in the form of \ceq{\ref{eqn:framework1}} in this section.
Precisely, we consider the

\begin{equation} \label{eqn:negative_equilibrium}
 \equil{\lambda}=\min_{\rho \in \density{\X}} \max_{\Pi \in T} \ip{\left(
                                                          \begin{array}{cc}
                                                           c- \ip{A}{\rho}   &  \\
                                                              & \Psi(\rho)-B \\
                                                          \end{array}
                                                        \right)
 }{\Pi}
\end{equation}

The existence of the equilibrium value is implied before. Thus we
only need to see when a \class{NC} algorithm will exist to calculate
the $\lambda$ approximately to high precision. To ease the
description of the algorithm, let $S(\rho)$ defined to be
\[
  S(\rho)=\left(
                                                          \begin{array}{cc}
                                                           c- \ip{A}{\rho}   &  \\
                                                              & \Psi(\rho)-B \\
                                                          \end{array}
                                                        \right)
\]
for any $\rho \in \density{\X}$. Choose $N(\Pi)$ to be the raw loss
matrix such that for any $\rho \in \density{\X}$ and $\Pi=\left(
                                                            \begin{array}{cc}
                                                              p &  \\
                                                                & P  \\
                                                            \end{array}
                                                          \right)
 \in T$,
we have
\begin{equation} \label{eqn:S_def}
  \ip{S(\rho)}{\Pi}=\ip{\rho}{N(\Pi)}
\end{equation}
It is easy to see that we can choose $N(\Pi)$ to be
\begin{equation} \label{eqn:N_def}
  N(\Pi)=-pA +\Psi^\ast(P) + (pc- \ip{B}{P})\I_\X
\end{equation}

\begin{figure}[t]
\noindent\hrulefill
\begin{mylist}{8mm}
\item[1.]
Let $\varepsilon=\frac{\delta}{4r}$ and $T=\ceil{\frac{16r^2 \ln
D}{\delta^2}}$. Also let $W^{(1)}=\I_{\X}$, $D=\dim{(\X)}$.

\item[2.]
Repeat for each $t = 1,\ldots,T$:

\begin{mylist}{8mm}
\item[(a)]
Let $\rho^{(t)}=W^{(t)}/\tr{W^{(t)}}$ and let $\Pi^{(t)}$ be the
projection onto the positive eigenspace of $S(\rho^{(t)})$.
\item[(b)]
Let $M^{(t)}=(N(\Pi^{(t)})+ r \I_\X)/2r$ and update the weight
matrix as follows:
\[
   W^{(t+1)}=exp(-\varepsilon \sum_{\tau=1}^t M^{(\tau)})
\]
\end{mylist}

\item[3.]
Choose $\bar{\rho}=\frac{1}{T} \sum_{\tau=1}^T \rho^{(t)}$ and let
$\bar{\Pi}$ be the projection onto the positive eigenspace of
$S(\bar{\rho})$.  Return $(\bar{\rho},\bar{\Pi})$ as the approximate
equilibrium point and $\ip{S(\bar{\rho})}{\bar{\Pi}}$ as the
approximate equilibrium value.
\end{mylist}
\noindent\hrulefill \caption{An algorithm that computes the
approximate value and point to precision $\delta$. }
\label{fig:mmw_generic}
\end{figure}

\begin{theorem}\label{thm:mmw_generic_equilibrium}
Let the input size, denoted by $|x|$, be the size of any instance in
\ceq{\ref{eqn:negative_equilibrium}}. If for any $\Pi \in T$ the
$N(\Pi)$ defined above satisfies $\snorm{N(\Pi)}_\infty \leq r$
where $r=O(polylog(|x|))$ and $\Psi(\rho)$ can be calculated in
\class{NC} for any $\rho \in \density{\X}$, then by using the
algorithm in Figure~\ref{fig:mmw_generic}, we can approximate the
equilibrium value to precision $\delta=\Omega(polylog(|x|))$ in
\class{NC}.
\end{theorem}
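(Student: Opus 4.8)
The plan is to split the argument into a \emph{correctness} part, showing that the value $\ip{S(\bar\rho)}{\bar\Pi}$ returned by the algorithm of Figure~\ref{fig:mmw_generic} lies within $\delta$ of $\equil{\lambda}$, and a \emph{complexity} part, showing that every line runs in \class{NC}. Correctness is an instance of the matrix multiplicative weight regret bound applied to the two-player game defining $\equil{\lambda}$: the minimizing (primal) player generates $\rho^{(1)},\ldots,\rho^{(T)}$ by the update rule, while the maximizing (dual) player answers with a best response $\Pi^{(t)}$. First I would note that, because the constraint set is $T=\{\Pi:0\leq\Pi\leq\I\}$, the maximum $\max_{\Pi\in T}\ip{S(\rho)}{\Pi}$ equals the sum of the positive eigenvalues of $S(\rho)$ and is attained exactly by the projection onto the positive eigenspace. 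Hence $\Pi^{(t)}$ is a genuine best response and $\ip{S(\rho^{(t)})}{\Pi^{(t)}}=\max_{\Pi\in T}\ip{S(\rho^{(t)})}{\Pi}=:g(\rho^{(t)})$, where $g(\rho)=\max_{\Pi\in T}\ip{S(\rho)}{\Pi}$ is the convex function the primal player minimizes.

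Next I would invoke the matrix multiplicative weight regret bound stated in Appendix~\ref{sec:MMW_survey}. The hypothesis $\snorm{N(\Pi)}_\infty\leq r$ guarantees the normalized matrices $M^{(t)}=(N(\Pi^{(t)})+r\I_\X)/2r$ satisfy $0\leq M^{(t)}\leq\I_\X$, so the bound applies to the update of step~2. Since $M^{(t)}$ differs from $N(\Pi^{(t)})/2r$ by an additive multiple of $\I_\X$, whose inner product with any density operator is the constant $\tfrac12$, translating back to the $N$-scale costs only a factor $2r$ and, after dividing by $T$, yields $\tfrac1T\sum_t\ip{\rho^{(t)}}{N(\Pi^{(t)})}\leq\min_{\rho\in\density{\X}}\tfrac1T\sum_t\ip{\rho}{N(\Pi^{(t)})}+2rR$ with per-round regret $R=\varepsilon+\tfrac{\ln D}{\varepsilon T}$. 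Two structural facts then reshape this. By \ceq{\ref{eqn:S_def}} the left side equals $\tfrac1T\sum_t g(\rho^{(t)})$. Because $N$ is affine in $\Pi$ by \ceq{\ref{eqn:N_def}}, we have $\tfrac1T\sum_t N(\Pi^{(t)})=N(\bar\Pi')$ for the average $\bar\Pi'=\tfrac1T\sum_t\Pi^{(t)}$, which again lies in the constraint set by convexity, so $\min_\rho\ip{\rho}{N(\bar\Pi')}=\min_\rho\ip{S(\rho)}{\bar\Pi'}\leq\max_{\Pi}\min_\rho\ip{S(\rho)}{\Pi}=\equil{\lambda}$. Combining gives $\tfrac1T\sum_t g(\rho^{(t)})\leq\equil{\lambda}+2rR$.

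Finally I would use convexity of $g$ to pass from the per-round average to the averaged point $\bar\rho$: $g(\bar\rho)\leq\tfrac1T\sum_t g(\rho^{(t)})$. Since $g(\bar\rho)=\ip{S(\bar\rho)}{\bar\Pi}$ is exactly the returned value (step~3) and $\equil{\lambda}=\min_\rho g(\rho)\leq g(\bar\rho)$, this sandwiches $\equil{\lambda}\leq\ip{S(\bar\rho)}{\bar\Pi}\leq\equil{\lambda}+2rR$. Substituting $\varepsilon=\tfrac{\delta}{4r}$ and $T=\ceil{16r^2\ln D/\delta^2}$ makes each of the two terms of $2rR$ equal to $\delta/2$, so the total error is at most $\delta$, as claimed. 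For the \class{NC} claim I would observe that, in the parameter regime of the theorem (with $r$ and $\ln D$ of order $O(\op{polylog}(|x|))$ and $\delta^{-1}=O(\op{polylog}(|x|))$), the number of rounds $T=\ceil{16r^2\ln D/\delta^2}$ is $O(\op{polylog}(|x|))$, and that each round runs in \class{NC}: the positive-eigenspace projections $\Pi^{(t)}$, $\bar\Pi$ and the matrix exponential $\exp(-\varepsilon\sum_\tau M^{(\tau)})$ are obtained from the parallel spectral-decomposition facts cited in the preliminaries, while forming $S(\rho^{(t)})$ and $N(\Pi^{(t)})$ needs only $\Psi$, $\Psi^\ast$ (assumed \class{NC}-computable) and elementary matrix arithmetic. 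As the rounds form a sequential chain of polylogarithmic length, each of polylogarithmic depth and polynomial size, the whole computation stays in \class{NC}.

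The main obstacle is the correctness bookkeeping rather than the complexity part: one must keep the best-response identity, the affineness of $N$ in $\Pi$, and the convexity of $g$ aligned so that the one-sided regret inequality becomes a genuine two-sided bound on $\equil{\lambda}$, and then verify that the particular constants in $\varepsilon$ and $T$ discharge the regret to exactly $\delta$. I would import the precise form of the regret bound from Appendix~\ref{sec:MMW_survey}; the accuracy of the approximate spectral decompositions used in the \class{NC} implementation is deferred to Appendix~\ref{sec:precision_issue}.
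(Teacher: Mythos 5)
Your proposal is correct and follows essentially the same route as the paper's own proof: the matrix MMW regret bound (Theorem~\ref{thm:mmw_simple}) applied to the normalized losses $M^{(t)}=(N(\Pi^{(t)})+r\I_\X)/2r$, the identical parameter choices $\varepsilon=\delta/4r$ and $T=\ceil{16r^2\ln D/\delta^2}$ splitting the regret into two $\delta/2$ terms, the best-response/affineness argument yielding the two-sided sandwich $\equil{\lambda}\leq\ip{S(\bar{\rho})}{\bar{\Pi}}\leq\equil{\lambda}+\delta$, and the same composition of polylogarithmically many \class{NC} iterations. The only cosmetic difference is how the upper bound is closed: you average the dual responses into $\bar{\Pi}'=\frac{1}{T}\sum_t\Pi^{(t)}\in T$ via linearity of $N$ and compare with the max-min value, whereas the paper substitutes the equilibrium point $\equil{\rho}$ for $\rho^\ast$ in the regret bound and uses the min-max side; these are interchangeable one-line bookkeeping steps resting on the same minimax equality.
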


We will leave the proof of Theorem~\ref{thm:mmw_generic_equilibrium}
in Appendix~\ref{sec:MMW_survey}.


\section{QIP(2) Case} \label{sec:qip2}
Now it is our turn to consider a real instance of semidefinite
program and apply our framework to solve it. Our first candidate is
the quantum interactive proof system with two-messages. In this
system, after the input $x$ is given, the polynomial-time bounded
quantum verifier will send one quantum message to an all powerful
quantum prover and get another quantum message back. Then the
verifier will decide whether to accept or to reject based on the
message sent back from the prover and the qubits kept at his side.
The only constraint on the all powerful quantum prover is that the
prover must operate an admissable quantum operation on the quantum
message sent to him. The complexity class \class{QIP(2)} denotes all
the languages which can be recognized by the procedure above.
Precisely, we have

\begin{definition} \label{def:qip2}
Any language $L$ is inside \class{QIP(2)} if and only if
\begin{itemize}
  \item If $x \in L$, there exists a prover such that the verifier will \emph{accept} with
  probability at least $c(|x|)$.
  \item If $x \notin L$, for any prover the verifier will \emph{accept} with
  probability at most $s(|x|)$.
\end{itemize}
where $c(|x|)-s(|x|)=\Omega(1/poly(|x|))$.
\end{definition}

It is known that \class{QIP(2)}$\subseteq$\class{PSPACE}
~\cite{JainUW09} by following Kale's way~\cite{Kale07} to solve
SDPs. By contrast, we will demonstrate here how our
\textbf{Framework~\ref{framework_1}} can be applied to this problem.
Namely, we provide an alternative proof of the result
\class{QIP(2)}$\subseteq$\class{PSPACE}. The main difference between
our approach and the previous approach is that we formulate the
problem using the density operators instead of quantum channels and
we solve the semidefinite program using the new framework.

Let $\M$ denote the message's space between the prover and the
verifier and $\V$ denote the verifier's private space. Let us assume
the input $x$ is fixed for the following discussion. Without lost of
generality, let the pure state $\rho_1 \in \density{\M \otimes \V}$
be the initial state for the input $x$, namely the state that the
verifier prepares given input $x$. The prover will then operate an
admissable quantum channel $\Phi : \lin{\M} \rightarrow \lin{\M}$ on
part of the state $\rho_1$ and it will result another state
$\rho_2=\Phi \otimes \I_{\lin{\V}}(\rho_1)$. The verifier will
performance a POVM measurement on $\rho_2$ to decide whether to
accept or to reject. Let $R$ be the POVM which corresponds to the
case where the verifier accepts. In order to decide whether $x \in
L$,  it suffices to solve the optimization problem
\[
  \max_{\Phi} \ip{R}{\rho_2} \text{ s.t. } \rho_2=\Phi \otimes \I_{\lin{\V}}(\rho_1)
\]
where the optimum value is the maximum probability that the verifier
accepts given the input $x$.

Because of Lemma~\ref{lemma:admissable_connect}, we have $\rho_1$
and $\rho_2$ are connected by an admissable quantum operation if and
only if $\tr_{\M} \rho_1=\tr_{\M} \rho_2$. Thus the above
optimization problem is equivalent to the following SDP, denoted by
SDP (I).

\begin{center}
  \begin{minipage}{2in}
    \centerline{\underline{SDP Problem}}\vspace{-7mm}
    \begin{align*}
      \text{maximize:}\quad & \ip{R}{\rho_2}\\
      \text{subject to:}\quad & \tr_{\M}(\rho_2) \leq \tr_{\M}(\rho_1),\\
      & \rho_2\in \density{\M \otimes \V}.
    \end{align*}
  \end{minipage}
  \hspace*{12mm}
  \begin{minipage}{2in}
    \centerline{\underline{Feasibility Problem}}\vspace{-7mm}
    \begin{align*}
      \text{ask whether:}\quad & \ip{R}{\rho_2} \geq c\\
      \text{subject to:}\quad & \tr_{\M}(\rho_2) \leq \tr_{\M}(\rho_1),\\
      & \rho_2\in \density{\M \otimes \V}.
    \end{align*}
  \end{minipage}
\end{center}

\subsection{Solution to the Feasibility Problem}
Following the Framework~\ref{framework_1}, we consider the
feasibility problem above. Precisely, we define

\begin{equation} \label{eqn:h_1}
 f_1(\rho, \Pi)=\ip{\left(
                    \begin{array}{cc}
                       c- \ip{R}{\rho}  &  \\
                        & \tr_{\M}(\rho)-\tr_{\M}(\rho_1) \\
                    \end{array}
                  \right)}{\Pi}
\end{equation}
where $\rho \in T_1= \density{\M \otimes \V}$ and $\Pi \in
T_2=\{\Pi: 0\leq \Pi \leq \I_{\M \oplus \mathbb{C}}\}$. Let
$\equil{\lambda_1}$ be the equilibrium value of function $f_1$,
namely,
\[
  \equil{\lambda_1} =\min_{\rho \in T_1} \max_{\Pi \in T_2} f_1(\rho, \Pi) =\max_{\Pi \in T_2} \min_{\rho \in T_1} f_1(\rho,
  \Pi)
\]
Base on Theorem~\ref{thm:equilibrum_feasibility}, the value of
$\equil{\lambda_1}$ will imply whether the original problem is
feasible. In addition to that, we will demonstrate how to convert
any approximately feasible solution to exactly feasible solution
without changing the value of the object function a lot.

\begin{lemma} \label{lemma:promise_gap_qip2}
Assume we can calculate the approximate equilibrium value and point
of the function $f_1$ to the precision $\delta$. Let the
$\bar{\lambda_1}$ and $\{\bar{\rho}, \bar{\Pi}\}$ be the approximate
equilibrium value and point returned by algorithm in
Figure~\ref{fig:mmw_generic}, $\equil{\lambda_1}$ be the actual
equilibrium value. Then we have
\begin{itemize}
  \item if $\bar{\lambda_1}>\delta$, then the original problem is
  infeasible.
  \item if $\bar{\lambda_1}\leq \delta$, then there exists a
  feasible solution $\tilde{\rho}$ such that
  $\ip{R}{\tilde{\rho}}\geq c- \sqrt{2\delta-\delta^2}$.
\end{itemize}
\end{lemma}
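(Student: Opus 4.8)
The plan is to handle the two bullets separately, and the key preliminary observation that drives everything is an exact formula for $\bar{\lambda_1}$. Since Step~3 of the algorithm takes $\bar{\Pi}$ to be the projection onto the positive eigenspace of the block-diagonal operator appearing in $f_1$, the returned value $\bar{\lambda_1}=\ip{S(\bar{\rho})}{\bar{\Pi}}$ is exactly the sum of the positive eigenvalues of that operator. Its two blocks are $c-\ip{R}{\bar{\rho}}$ and $\tr_{\M}(\bar{\rho})-\tr_{\M}(\rho_1)$, and the second is traceless because $\bar{\rho}$ and $\rho_1$ both have unit trace; so by Lemma~\ref{lemma:trace_norm} the positive-eigenvalue mass of the second block equals $\tfrac{1}{2}\snorm{\tr_{\M}(\bar{\rho})-\tr_{\M}(\rho_1)}_1$. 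Writing $a=c-\ip{R}{\bar{\rho}}$ and $s=\tfrac{1}{2}\snorm{\tr_{\M}(\bar{\rho})-\tr_{\M}(\rho_1)}_1$, I obtain the clean identity $\bar{\lambda_1}=\max\{a,0\}+s$, which is the single most useful fact for both parts.

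For the first bullet I would combine the structural bound $\bar{\lambda_1}\ge\equil{\lambda_1}$ (as $\bar{\rho}\in T_1$ is a candidate minimizer and $\bar{\Pi}$ maximizes over $T_2$) with the precision guarantee, giving $\equil{\lambda_1}\le\bar{\lambda_1}\le\equil{\lambda_1}+\delta$. If $\bar{\lambda_1}>\delta$ then $\equil{\lambda_1}\ge\bar{\lambda_1}-\delta>0$, and Theorem~\ref{thm:equilibrum_feasibility} immediately yields infeasibility.

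For the second bullet, $\bar{\lambda_1}\le\delta$ gives the joint constraint $\max\{a,0\}+s\le\delta$; in particular $s\le\delta$, so $\tr_{\M}(\bar{\rho})$ is $\delta$-close in trace distance to $\tr_{\M}(\rho_1)$. I would then convert $\bar{\rho}$ into an exactly feasible $\tilde{\rho}$ in three moves. First, purify $\bar{\rho}$ to a pure $\psi$ on $\M\otimes\V\otimes\W$ and use Lemma~\ref{lemma:purification_fidelity} (with reference system $\V$) together with Uhlmann's theorem to produce a pure $\phi$ on the same space with $\tr_{\M\W}(\phi)=\tr_{\M}(\rho_1)$ and $\fid(\psi,\phi)=\fid(\tr_{\M}(\bar{\rho}),\tr_{\M}(\rho_1))$; setting $\tilde{\rho}=\tr_{\W}(\phi)$ gives $\tr_{\M}(\tilde{\rho})=\tr_{\M}(\rho_1)$, so $\tilde{\rho}$ is exactly feasible. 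Second, apply Lemma~\ref{lemma:trace_bound} to the reduced states $\tr_{\M}(\bar{\rho}),\tr_{\M}(\rho_1)$ with the genuine purifications $\psi,\phi$: its first inequality $1-s\le\sqrt{1-t^2}$ squares to $t\le\sqrt{2s-s^2}$ for $t=\tfrac{1}{2}\snorm{\psi-\phi}_1$, and monotonicity of trace distance under $\tr_{\W}$ descends this to $\tfrac{1}{2}\snorm{\bar{\rho}-\tilde{\rho}}_1\le\sqrt{2s-s^2}$. Third, since $0\le R\le\I$ and $\bar{\rho}-\tilde{\rho}$ is traceless, Lemma~\ref{lemma:trace_norm} bounds the objective drift by $\ip{R}{\bar{\rho}}-\ip{R}{\tilde{\rho}}\le\tfrac{1}{2}\snorm{\bar{\rho}-\tilde{\rho}}_1\le\sqrt{2s-s^2}$, hence $\ip{R}{\tilde{\rho}}\ge c-a-\sqrt{2s-s^2}$.

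The final and most delicate step is to show $a+\sqrt{2s-s^2}\le\sqrt{2\delta-\delta^2}$ under the joint constraint $\max\{a,0\}+s\le\delta$, so that the extra $a$-term does not degrade the target bound. When $a\le0$ this is immediate by monotonicity of $\sqrt{2s-s^2}$; when $a>0$ one maximizes $a+\sqrt{2s-s^2}$ subject to $a+s=\delta$ and checks that, for the small range of interest (concretely $\delta\le 1-1/\sqrt{2}$, which holds since $\delta$ is a small precision parameter), the maximum is attained at $a=0,\ s=\delta$ and equals $\sqrt{2\delta-\delta^2}$. This is exactly where using the \emph{joint} bound $\max\{a,0\}+s\le\delta$ rather than bounding $a$ and $s$ separately is essential: a separate treatment would only give the weaker $c-\delta-\sqrt{2\delta-\delta^2}$. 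I expect the main obstacle to be the plumbing that transports fidelity from the reduced states all the way to the full states while simultaneously enforcing \emph{exact} feasibility $\tr_{\M}(\tilde{\rho})=\tr_{\M}(\rho_1)$ — secured by pairing Lemma~\ref{lemma:purification_fidelity} with Lemma~\ref{lemma:trace_bound} and partial-trace monotonicity — after which the constrained optimization closes the bound tightly.
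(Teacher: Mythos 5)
Your proof is correct and takes essentially the same route as the paper's: the exact identity $\bar{\lambda_1}=\max\{c-\ip{R}{\bar{\rho}},0\}+\tfrac{1}{2}\snorm{\tr_\M(\bar{\rho})-\tr_\M(\rho_1)}_1$, Theorem~\ref{thm:equilibrum_feasibility} plus the precision guarantee for the first bullet, and for the second bullet the purification-fidelity construction of $\tilde{\rho}$ (Lemmas~\ref{lemma:purification_fidelity}/\ref{lemma:compute_purification}), the Fuchs--van de Graaf consequence $t\leq\sqrt{2s-s^2}$ (Lemma~\ref{lemma:trace_bound}), the trace-norm duality bound on the objective drift (Lemma~\ref{lemma:trace_norm}), and the same joint-constraint monotonicity argument yielding $c-\sqrt{2\delta-\delta^2}$. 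The only cosmetic difference is that you transfer fidelity through a genuine purification on an ancilla space $\W$ and then use partial-trace monotonicity, whereas the paper applies Lemmas~\ref{lemma:purification_fidelity} and~\ref{lemma:trace_bound} directly to the mixed extensions $\bar{\rho},\tilde{\rho}$.
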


\begin{proof}
\begin{itemize}
  \item If $\bar{\lambda_1}>\delta$, namely, $\equil{\lambda_1}\geq
  \bar{\lambda_1}-\delta>0$, then due to Theorem~\ref{thm:equilibrum_feasibility}, the original
  problem is feasible.
  \item Otherwise, due to Lemma~\ref{lemma:trace_norm}, we have
  \begin{equation} \label{eqn:qip2_lemma1}
    h_1(\bar{\rho},\bar{P})=\max \{ c-\ip{R}{\bar{\rho}},0\}
    +\frac{1}{2}\snorm{\tr_\M (\bar{\rho})-\tr_\M(\rho_1)}_1 \leq
    \delta
  \end{equation}
  By Lemma~\ref{lemma:purification_fidelity} and~\ref{lemma:compute_purification}, we can compute $\tilde{\rho}$
  such that $\fid(\tr_\M
  (\bar{\rho}),\tr_\M(\rho_1))=\fid(\tilde{\rho}, \rho_1)$ and
  $\tr_\M(\tilde{\rho})=\tr_\M(\rho_1)$.
  Let $s=\frac{1}{2}\snorm{\tr_\M (\bar{\rho})-\tr_\M(\rho_1)}_1$ and
  $t=\frac{1}{2}\snorm{\tilde{\rho}-\bar{\rho}}_1$.
  Then we have
  \begin{eqnarray*}
   \ip{R}{\tilde{\rho}}-c & = & \ip{R}{\bar{\rho}}+
   \ip{R}{\tilde{\rho}-\bar{\rho}} -c \\
   & \geq & \frac{1}{2}\snorm{\tr_\M
   (\bar{\rho})-\tr_\M(\rho_1)}_1-\delta -
   \frac{1}{2}\snorm{\tilde{\rho}-\bar{\rho}}_1 \\
   & = & s-t-\delta  \geq s-\sqrt{2s-s^2} -\delta \\
   & \geq & \delta-\sqrt{2\delta-\delta^2}-\delta =-\sqrt{2\delta-\delta^2}
  \end{eqnarray*}
where the first inequality is due to \ceq{\ref{eqn:qip2_lemma1}} and
Lemma~\ref{lemma:trace_norm} and the second inequality comes from
Lemma~\ref{lemma:trace_bound}. The last inequality is because
$s-\sqrt{2s-s^2}$ is decreasing when $0\leq s\leq 0.2$ and by
\ceq{\ref{eqn:qip2_lemma1}} $s\leq \delta$.
\end{itemize}
\end{proof}

\begin{theorem} \label{thm:qip_feasible}
For any guess $c$ ($ 0\leq c\leq 1$) for the feasibility problem,
and let the input size denoted by $|x|$ be the size of function
$f_1$, there is a \class{NC} algorithm to solve the feasibility
problem by returning either case in
Lemma~\ref{lemma:promise_gap_qip2}.
\end{theorem}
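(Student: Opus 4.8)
The plan is to show that the convex-concave function $f_1$ of Equation~\ref{eqn:h_1} meets the two hypotheses of Theorem~\ref{thm:mmw_generic_equilibrium}, so that the algorithm of Figure~\ref{fig:mmw_generic} computes the approximate equilibrium value $\bar{\lambda_1}$ and point $\{\bar\rho,\bar\Pi\}$ in \class{NC}; feeding these into Lemma~\ref{lemma:promise_gap_qip2} then decides the feasibility problem by returning exactly one of its two cases. Matching Framework~\ref{framework_1} to SDP (I), I set $\X=\M\otimes\V$, $A=R$, $\Psi=\tr_\M$, and $B=\tr_\M(\rho_1)$, so that the operators $S(\rho)$ and $N(\Pi)$ of Equations~\ref{eqn:S_def} and~\ref{eqn:N_def} are precisely the ones induced by $f_1$.

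The first hypothesis, that $\Psi(\rho)$ be computable in \class{NC}, is immediate: $\Psi=\tr_\M$ is a partial trace, obtained by summing out the $\M$-register, hence a fixed family of matrix additions realizable in \class{NC}. The substantive hypothesis is the width bound $\snorm{N(\Pi)}_\infty\le r$ with $r=O(\mathrm{polylog}(|x|))$, and this is where I would concentrate. Writing $\Pi=\bigl(\begin{smallmatrix}p&\\&P\end{smallmatrix}\bigr)$ with $0\le p\le 1$ and $0\le P\le\I$, Equation~\ref{eqn:N_def} gives $N(\Pi)=-pR+(\tr_\M)^{\ast}(P)+(pc-\ip{B}{P})\I_\X$, and I would bound the three summands separately: $\snorm{pR}_\infty\le 1$ because $0\le R\le\I$ is a POVM element; the adjoint of the partial trace is $(\tr_\M)^{\ast}(P)=\I_\M\otimes P$, so $\snorm{(\tr_\M)^{\ast}(P)}_\infty=\snorm{P}_\infty\le 1$; and since $B=\tr_\M(\rho_1)$ is a density operator we have $0\le\ip{B}{P}\le 1$ and $0\le pc\le 1$, whence $\abs{pc-\ip{B}{P}}\le 1$. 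The triangle inequality then yields $\snorm{N(\Pi)}_\infty\le 3$, so we may take $r=O(1)$, comfortably $O(\mathrm{polylog}(|x|))$. This bound is uniform in $\dim(\X)$ precisely because every summand is an operator-norm quantity and the identity contributes $1$ irrespective of the dimension.

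With both hypotheses in place, Theorem~\ref{thm:mmw_generic_equilibrium} produces $\bar{\lambda_1}$ and $\{\bar\rho,\bar\Pi\}$ to any precision $\delta=\Omega(1/\mathrm{polylog}(|x|))$ in \class{NC}; here the iteration count $T=\ceil{16r^2\ln D/\delta^2}$ stays polylogarithmic because $r=O(1)$ and $\ln D=\ln\dim(\X)=O(\log|x|)$, so the sequential chain of per-iteration \class{NC} steps composes to an \class{NC} computation. Lemma~\ref{lemma:promise_gap_qip2} then supplies the dichotomy: if $\bar{\lambda_1}>\delta$ the instance is infeasible, while if $\bar{\lambda_1}\le\delta$ the purification construction of Lemmas~\ref{lemma:purification_fidelity} and~\ref{lemma:compute_purification} outputs, in \class{NC}, a feasible $\tilde\rho$ with $\ip{R}{\tilde\rho}\ge c-\sqrt{2\delta-\delta^2}$; it is this trace-norm-based conversion, unavailable to the primal-dual method, that lets the argument accommodate the general objective $A=R$ rather than only $A=\I_\X$. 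The main obstacle is really the single width estimate $\snorm{N(\Pi)}_\infty=O(1)$; once it is in hand the remaining work is the routine bookkeeping of composing finitely many \class{NC} subroutines (matrix exponentials, spectral projections, partial traces, and the fidelity-preserving purification), each already known to lie in \class{NC}.
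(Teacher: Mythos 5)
Your proposal is correct and follows essentially the same route as the paper's own proof: instantiating Framework~\ref{framework_1} with $\Psi=\tr_\M$, $A=R$, $B=\tr_\M(\rho_1)$, establishing the width bound $\snorm{N(\Pi)}_\infty\le 3$ term by term exactly as the paper does, and then invoking Theorem~\ref{thm:mmw_generic_equilibrium} together with Lemma~\ref{lemma:promise_gap_qip2} and the \class{NC} purification construction of Lemma~\ref{lemma:compute_purification} before composing the \class{NC} circuits. Your write-up is somewhat more explicit about why each summand is bounded (e.g.\ spelling out $(\tr_\M)^{\ast}(P)=\I_\M\otimes P$ and $0\le\ip{B}{P}\le 1$), but the argument is the same.
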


\begin{proof}
The algorithm basically follows the
Theorem~\ref{thm:mmw_generic_equilibrium} and
Lemma~\ref{lemma:promise_gap_qip2}. Since the particular feasibility
problem in our consideration has $\Psi(\cdot)=\tr_\M(\cdot)$ and
$A=R, B=\tr_\M(\rho_1)$, we have
\[
  N(\Pi)=-pR+\I_\M \otimes P +(pc- \ip{\tr_\M(\rho_1)}{P})\I_{\M
  \otimes \V}
\]
according to the definition in \ceq{\ref{eqn:N_def}} where $\Pi=\left(
                                                                  \begin{array}{cc}
                                                                    p &  \\
                                                                      & P \\
                                                                  \end{array}
                                                                \right)
                                                                \in
                                                                T_2$
in our problem. It is easy to see that
\[
  \snorm{N(\Pi)}_\infty \leq \snorm{-pR}_\infty+\snorm{\I_\M \otimes P}_\infty +\snorm{(pc- \ip{\tr_\M(\rho_1)}{P})\I_{\M
  \otimes \V}}_\infty \leq 1+1+1=3
\]
for any $\Pi \in T_2$ and $\Psi(\rho)=\tr_\M(\rho)$ can be
calculated in \class{NC} for any $\rho \in T_1$. Thus, by
Theorem~\ref{thm:mmw_generic_equilibrium}, we can compute the
approximate equilibrium value and point in
Lemma~\ref{lemma:promise_gap_qip2} to precision $\delta
=\Omega(1/polylog(|x|))$ in \class{NC}. Based on the two cases
discussed in Lemma~\ref{lemma:promise_gap_qip2}, we can either claim
the original problem is infeasible or calculate the $\tilde{\rho}$
by a \class{NC} algorithm according to
Lemma~\ref{lemma:compute_purification}. Compose all the \class{NC}
circuits above, then we have a \class{NC} algorithm for the
feasibility problem.
\end{proof}

\subsection{Solution to the Promised Version and General Case}

We are ready to apply the result of Theorem~\ref{thm:qip_feasible}
to simulate \class{QIP(2)} or more general cases. Recall the
definition of \class{QIP(2)}, there will be two promises with gap
$\Delta=c(|x|)-s(|x|)=\Omega(1/poly(|x|))$. Thus,

\begin{cor} \label{cor:qip2_pspace}
\class{QIP(2)} $\subseteq$ \class{PSPACE}
\end{cor}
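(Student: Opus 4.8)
The plan is to invoke Theorem~\ref{thm:qip_feasible} with the guess value set to the completeness threshold $c = c(|x|)$ and to match the two promises of Definition~\ref{def:qip2} against the two cases of Lemma~\ref{lemma:promise_gap_qip2}. Fix an input $x$ and form the feasibility problem of SDP (I), with $A = R$, $B = \tr_\M(\rho_1)$, and guess $c = c(|x|)$. The maximum acceptance probability of the protocol equals the optimum of SDP (I); hence if $x \in L$ the problem is feasible, while if $x \notin L$ the optimum is at most $s(|x|) = c(|x|) - \Delta$ with $\Delta = \Omega(1/poly(|x|))$. Running the \class{NC} algorithm of Theorem~\ref{thm:qip_feasible} returns, to precision $\delta$, either the verdict $\bar{\lambda_1} > \delta$ (infeasible) or a solution $\tilde{\rho}$ with $\ip{R}{\tilde{\rho}} \ge c - \sqrt{2\delta-\delta^2}$. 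I would accept $x$ exactly when the second case occurs.

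It remains to pick $\delta$ so that the two cases coincide with $x \in L$ and $x \notin L$. The instance fed to Theorem~\ref{thm:mmw_generic_equilibrium} is the function $f_1$ over $\density{\M \otimes \V}$, whose dimension $D = \dim(\M \otimes \V)$ is exponential in $|x|$; writing $n$ for the size of $f_1$, we have $n = 2^{poly(|x|)}$. Theorem~\ref{thm:qip_feasible} therefore achieves any precision $\delta = \Omega(1/polylog(n)) = \Omega(1/poly(|x|))$, and $r = O(polylog(n)) = O(poly(|x|))$, so the iteration count $T$ in Figure~\ref{fig:mmw_generic} stays polynomial in $|x|$. I would set $\delta$ to a small inverse polynomial in $|x|$ with $\sqrt{2\delta-\delta^2} < \Delta/2$, which is possible since $\sqrt{2\delta-\delta^2} = O(\sqrt{\delta})$ and $\Delta = \Omega(1/poly(|x|))$. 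With this choice, in the feasible case $\ip{R}{\tilde{\rho}} \ge c - \Delta/2 > s(|x|)$, so a yes-instance always yields an accepting solution, while in the infeasible case the optimum lies below $c$, ruling out $x \in L$. Thus the algorithm cleanly separates the two promises.

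Finally I would translate the \class{NC} algorithm into a \class{PSPACE} procedure. The input to the algorithm (the matrices $R$ and $\rho_1$, equivalently the entries of $f_1$) has size $n = 2^{poly(|x|)}$ but is succinctly generated by the polynomial-time verifier, so an \class{NC} circuit on it has depth $polylog(n) = poly(|x|)$ and width $2^{poly(|x|)}$. A uniformly described circuit of polynomial depth and exponential width is evaluable in polynomial space by depth-first recursion on the depth, recomputing gate values on demand, and each entry of $R$ and $\rho_1$ is itself computable in \class{PSPACE} from the verifier's description; composing these evaluations decides $L$ in \class{PSPACE}, giving \class{QIP(2)} $\subseteq$ \class{PSPACE}. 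The main obstacle is the precision bookkeeping behind this last step: the \class{NC} matrix operations of Figure~\ref{fig:mmw_generic} (matrix exponentials and the spectral projections defining $\Pi^{(t)}$ and $\bar{\Pi}$) must be carried out to enough bits that accumulated numerical error stays well below $\delta$, so that the separation above survives — precisely the issue deferred to Appendix~\ref{sec:precision_issue}.
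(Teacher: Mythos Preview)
Your argument is correct and follows essentially the same route as the paper. The only cosmetic difference is the choice of threshold: the paper sets the guess value to the midpoint $c=\tfrac{1}{2}(c(|x|)+s(|x|))$ with $\delta=\tfrac{1}{18}\Delta^2$, whereas you take $c=c(|x|)$; both choices separate the two promises once $\sqrt{2\delta-\delta^2}<\Delta/2$, and the paper likewise concludes by invoking $\class{NC(poly)}=\class{PSPACE}$ rather than spelling out the depth-first circuit evaluation as you do.
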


\begin{proof}
For any input $x$, we simply compose the following circuits.
\begin{itemize}
  \item For any specific $x$, compute the corresponding initial
  state $\rho_1$ and the function $f_1$. This can be done in
  \class{NC(poly)} because it only involves the computation of the
  product of a polynomial number of exponential-size matrices that
  corresponds to the quantum circuits used by the verifier.
  \item Choose the guess value $c=\frac{1}{2}(c(|x|)+s(|x|))$
  and precision $\delta=\frac{1}{18}\Delta^2$. Then use the \class{NC} algorithm implied by
  Theorem~\ref{thm:qip_feasible} to calculate the equilibrium value
  $\equil{\lambda_1}$ to the precision $\delta$.
  \item Finally, based on the two cases in
  Lemma~\ref{lemma:promise_gap_qip2}, we can claim either the
  optimum value of the SDP is less than $c$ or at least
  $c-\sqrt{2\delta-\delta^2}\geq c-\frac{1}{3}\Delta$. Then we are
  able to tell whether $x \in L$.
\end{itemize}
Due to the facts in Appendix~\ref{sec:NC_parellel}, all the circuits
can be composed in \class{NC(poly)}. Because of the fact
\class{NC(poly)}=\class{PSPACE} ~\cite{Borodin77}, we have
\class{QIP(2)}$\subseteq$\class{PSPACE}.
\end{proof}

Furthermore, for the case where no such promise exists we can
develop a binary search to approximate the optimum value to high
precision.

\begin{theorem} \label{thm:qip2_binary_search}
Let $x$ be any instance of SDP (I) and $\alpha$ be the optimum value
of SDP (I). There exists a \class{NC} algorithm which can calculate
$\alpha$ to precision $\delta=\Omega(1/polylog(|x|))$. Furthermore,
there is a \class{NC} algorithm to compute $\rho_2$ such that
$\ip{R}{\rho_2}\geq \alpha-\delta$.
\end{theorem}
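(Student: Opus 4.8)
**The plan is to combine the feasibility solver of Theorem~\ref{thm:qip_feasible} with a binary search over the guess value $c$, exactly as the paragraph preceding the theorem suggests.** The optimum value $\alpha$ of SDP~(I) lies in $[0,1]$, since $\ip{R}{\rho_2}$ is an acceptance probability. First I would set up a standard bisection: maintain an interval $[\ell, u] \subseteq [0,1]$ known to contain $\alpha$, initialized to $[0,1]$, and at each stage test the guess $c = \tfrac{1}{2}(\ell+u)$ by invoking the feasibility algorithm of Theorem~\ref{thm:qip_feasible} at precision $\delta$. By Lemma~\ref{lemma:promise_gap_qip2}, that call returns one of two verdicts: either $\bar{\lambda_1} > \delta$, in which case the problem is infeasible and hence $\alpha < c$ (so I update $u \leftarrow c$); or $\bar{\lambda_1} \le \delta$, in which case there is a feasible $\tilde{\rho}$ with $\ip{R}{\tilde{\rho}} \ge c - \sqrt{2\delta - \delta^2}$, giving a certified lower bound $\alpha \ge c - \sqrt{2\delta-\delta^2}$ (so I update $\ell \leftarrow c - \sqrt{2\delta-\delta^2}$).

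After $O(\log(1/\delta)) = O(\log|x|)$ rounds of bisection the interval width shrinks below $\delta$, and the midpoint of the final interval estimates $\alpha$ to within the target precision; the accumulated additive slack from the $\sqrt{2\delta-\delta^2}$ terms must be folded into the final error, which is why I would run each feasibility call at a precision slightly finer than the stated $\delta$ (a constant-factor or $O(1/\log)$ adjustment, harmless for the $\Omega(1/\mathrm{polylog}(|x|))$ guarantee). For the second claim, I retain the $\tilde{\rho}$ witnessing the last successful (feasible) call: it is an exactly feasible density operator for SDP~(I) with $\ip{R}{\tilde{\rho}} \ge \alpha - \delta$, and it is produced in \class{NC} by Lemma~\ref{lemma:compute_purification}. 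This $\tilde{\rho}$ is the desired $\rho_2$.

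The main structural point to verify is that the whole binary search stays in \class{NC}, not merely that each individual feasibility call does. A naive sequential bisection has $O(\log|x|)$ adaptive rounds, each depending on the previous verdict, which threatens the bounded depth required for \class{NC}. The clean fix — and the step I expect to be the real obstacle — is to run all candidate guesses in parallel: the set of guess values $c$ one would ever query is a fixed dyadic grid of $O(\mathrm{poly}(1/\delta))$ points in $[0,1]$, so I would evaluate the feasibility algorithm of Theorem~\ref{thm:qip_feasible} at every grid point simultaneously (each in \class{NC}, hence all in \class{NC} by composing polynomially many disjoint circuits), then combine the verdicts by a single \class{NC} computation that finds the largest $c$ returning ``feasible.'' This replaces the adaptive search by one parallel sweep plus a threshold/max operation, both of which are comfortably in \class{NC}.

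Finally I would confirm the precision bookkeeping is consistent with the hypotheses of Theorem~\ref{thm:mmw_generic_equilibrium}: the width bound $\snorm{N(\Pi)}_\infty \le 3$ and the \class{NC}-computability of $\tr_{\M}(\rho)$ established in the proof of Theorem~\ref{thm:qip_feasible} hold uniformly for every guess $c \in [0,1]$, so each of the polynomially many parallel calls satisfies the theorem's requirements with $r = O(1)$ and $\delta = \Omega(1/\mathrm{polylog}(|x|))$. Assembling the grid evaluation, the maximization, and the retrieval of the witness $\tilde{\rho}$ via Lemma~\ref{lemma:compute_purification} then yields a single \class{NC} circuit computing $\alpha$ to precision $\delta$ together with a near-optimal feasible $\rho_2$, as claimed.
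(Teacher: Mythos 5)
Your proposal is correct and, at its core, is the same argument the paper gives: reduce the computation of $\alpha$ to repeated calls of the feasibility algorithm of Theorem~\ref{thm:qip_feasible}, using the two verdicts of Lemma~\ref{lemma:promise_gap_qip2} to bracket $\alpha$, and extract the witness $\tilde{\rho}$ (via Lemma~\ref{lemma:compute_purification}) from a successful call. Where you diverge is in how the search is organized, and the divergence stems from a concern that is not actually a problem. The paper simply runs the adaptive sequential bisection and observes that, since $\delta=\Omega(1/\mathrm{polylog}(|x|))$, there are at most polylogarithmically many iterations; composing polylogarithmically many \class{NC} stages (even adaptively, each stage consuming the previous verdict) multiplies depth by a polylog factor and size by a polylog factor, so the whole circuit remains \class{NC}. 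Your worry that adaptive rounds ``threaten'' the depth bound is therefore misplaced, but your fix --- replacing the bisection by a non-adaptive sweep over a grid of $O(\mathrm{polylog}(|x|))$ guess values, running all feasibility calls in parallel, and taking the largest $c$ declared feasible --- is a perfectly sound alternative and arguably more transparent, since it never composes dependent stages at all. One bookkeeping point to tighten: the slack in the feasible verdict is $\sqrt{2\delta-\delta^2}\approx\sqrt{2\delta}$, so to certify $\alpha$ to final precision $\delta$ you must run each feasibility call at precision roughly $\delta^2/2$; this is a quadratic refinement, not the ``constant-factor or $O(1/\log)$ adjustment'' you describe, though it is still $\Omega(1/\mathrm{polylog}(|x|))$ and hence harmless for the theorem's guarantee (the paper's own proof glosses over exactly the same point when it states the second verdict as $\alpha\geq c-\delta$).
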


\begin{proof}
The proof follows from the binary search based on
Lemma~\ref{lemma:promise_gap_qip2}. Start with guess value $c$, by
Theorem~\ref{thm:qip_feasible}, we have a \class{NC} algorithm to
claim either $\alpha <c $ or $\alpha \geq c -\delta$. Thus, by using
binary search, we can calculate $\alpha$ to precision $\delta$.
Since $\delta=\Omega(1/polylog(|x|))$, there will be at most
polynomial-logarithm iterations in the binary search. Therefore, all
the circuits above can be composed in \class{NC}.
\end{proof}

\section{One-round Product QRG} \label{sec:qrg2}

In this section, we demonstrate how the Framework~\ref{framework_2}
can be applied to real problems. Here we consider the simplified
version of quantum refereed game with one round (two turns). The
upper bound of the complexity class recognized by the latter model,
denoted by \class{QRG(2)}, becomes more and more interesting after
the proof \class{QIP}=\class{PSPACE}~\cite{JainJUW09, Wu10}.
Particular, it is interesting to see whether
\class{QRG(2)}=\class{PSPACE} while its classical counterpart
\class{RG(2)} equals \class{PSPACE}~\cite{FeigeK97}.

The general one-round quantum refereed game works as follows. After
receiving some input $x$, the verifier then prepares some quantum
messages and send them to both Yes and No provers. After both
provers reply with quantum messages, the verifier will base on all
the quantum states at his hand to decide whether to accept $x$ or
not. Now, let us consider a simplified case where the messages sent
to the Yes prover and No prover are product states. We denote all
the languages recognized by this procedure by
\class{product-QRG(2)}. At the first sight, this might seem to be a
very restricted complexity class. However, by using the techniques
from the recent result~\cite{BeigiSW10}, we can prove the
\class{product-QRG(2)} contains all the languages which can
recognized by the most general model of one-round quantum refereed
game except the message sent to Yes prover is only of poly-logarithm
size.

Let us formulate the product one-round quantum refereed game in the
following way. Since the messages sent to both provers are product
states, let $\V_{Y}$ denote the verifier's private space when
interacts with the Yes prover and $\Y$ denote the message space
between the verifier and the Yes prover. Similarly, let $\V_{N}$
denote the verifier's private space when interacts with the No
prover and $\N$ denote the message space between the verifier and
the No prover. Without lose of generality, we can assume the pure
states $\rho_{Y} \in \density{\V_Y \otimes \Y}, \rho_{N} \in
\density{\V_N \otimes \N}$ are the initial states of the verifier
given some input $x$. We will assume the input $x$ is fixed in the
following discussion. Then the Yes and No provers will apply some
admissable quantum operations $\Phi_{Yes}, \Phi_{No}$ respectively
on part of the density operators $\rho_Y, \rho_N$. The resultant
states will be $\sigma_Y=\Phi_{Yes}\otimes \I_{\lin{\V_Y}} (\rho_Y),
\sigma_N=\Phi_{No}\otimes \I_{\lin{\V_N}} (\rho_N)$. Finally, the
verifier will make some POVM measurement on $\sigma_Y \otimes
\sigma_N$ to decide whether to accept or reject. Let $R$ ($0\leq
R\leq \I$) be the POVM which corresponds to the case where Yes
prover wins, then \emph{Game Value} $GV(R)$ is defined to be
\[
  GV(R) = \max_{\sigma_Y \in T_1} \min_{\sigma_N \in T_2} \ip{R}{\sigma_Y \otimes
  \sigma_N}= \min_{\sigma_N \in T_2} \max_{\sigma_Y \in T_1} \ip{R}{\sigma_Y \otimes
  \sigma_N}
\]
where
\[
T_1=\{\sigma \in \density{\V_Y \otimes \Y}  : \exists \text{
admissable } \Phi_{Yes}: \lin{\Y} \rightarrow \lin{\Y},
\sigma=\Phi_{Yes}\otimes \I_{\lin{\V_Y}} (\rho_Y)\}
\]
and
\[
T_2=\{\sigma \in \density{\V_N \otimes \N}  : \exists \text{
admissable } \Phi_{No}: \lin{\N} \rightarrow \lin{\N},
\sigma=\Phi_{No}\otimes \I_{\lin{\V_N}} (\rho_N)\}
\]
Since $\rho_Y, \rho_N$ are pure states, by
Lemma~\ref{lemma:admissable_connect}, we can simplify the definition
of $T_1, T_2$ to
\[
  T_1=\{\sigma_Y \in \density{\V_Y \otimes \Y}:
  \tr_\Y(\sigma_Y)=\tr_\Y(\rho_Y)\} \text{ , }  T_2=\{\sigma_N \in \density{\V_N \otimes \N}:
  \tr_\N(\sigma_N)=\tr_\N(\rho_N)\}
\]

\begin{definition} \label{def:product_QRG(2)}
Any language $L$ is inside this \class{product-QRG(2)} if and only
if for any input $x$,
\begin{itemize}
  \item If $x \in L$, then $GV(R)\geq c(|x|)$
  \item If $x \notin L$, then $GV(R) \leq s(|x|)$
\end{itemize}
where $c(|x|)-s(|x|)=\Omega(1/poly(|x|))$.
\end{definition}

Let $b=\frac{1}{2}(c(|x|)+s(|x|))$ and $\Delta=c(|x|)-s(|x|)$. By
applying Framework~\ref{framework_2}, we define the convex-concave
function $h_1$ as follows.
\[
  h_1(\{\sigma_Y,\Pi\}, \sigma_N)=\ip{R}{\sigma_Y \otimes
  \sigma_N}-b + \frac{2}{\Delta}\ip{\Pi}{\tr_\N(\sigma_N)-\tr_\N({\rho_N}) }
\]
where $\Pi$ comes from the set $T=\{\Pi: 0\leq \Pi \leq
\I_{\V_N}\}$. The function $h_1$ is actually the weighted sum where
the factor $\alpha$ is chosen to be $2/\Delta$ . Then we will
consider the new equilibrium value of function $h_1$ instead.
Precisely, we define
\[
  \equil{\mu} = \max_{\{\sigma_Y,\Pi\} \in T_1\times T} \min_{\sigma_N \in
  \density{\V_N \otimes N}} h_1(\{\sigma_Y,\Pi\}, \sigma_N)= \min_{\sigma_N \in
  \density{\V_N \otimes N}}\max_{\{\sigma_Y,\Pi\} \in T_1\times T} h_1(\{\sigma_Y,\Pi\}, \sigma_N)
\]
Then we are ready to show the gap between two promises in
Definition~\ref{def:product_QRG(2)} can be transferred to the
equilibrium value $\equil{\mu}$.

\begin{theorem} \label{lemma:promise_gap_qrg2}
Given the two promises in the Definition~\ref{def:product_QRG(2)},
we have for any input $x$,
\begin{itemize}
  \item If $x \in L$, then $\equil{\mu} \geq \frac{1}{4}\Delta$
  \item If $x \notin L$, then $\equil{\mu} \leq -\frac{1}{2}\Delta$
\end{itemize}
\end{theorem}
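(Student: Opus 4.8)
The plan is to analyze the two promise cases separately, relating the modified equilibrium value $\equil{\mu}$ back to the original game value $GV(R)$ by controlling the penalty term $\frac{2}{\Delta}\ip{\Pi}{\tr_\N(\sigma_N)-\tr_\N(\rho_N)}$. The central observation is that the penalty acts as a relaxation: by moving $\sigma_N$ out of the constrained set $T_2$ into the full density-operator set $\density{\V_N \otimes \N}$, we no longer force $\tr_\N(\sigma_N)=\tr_\N(\rho_N)$, but we charge for any violation. First I would invoke Lemma~\ref{lemma:trace_norm}: for fixed $\sigma_N$, maximizing over $\Pi \in T=\{\Pi: 0\leq \Pi \leq \I_{\V_N}\}$ gives $\max_{\Pi \in T}\ip{\Pi}{\tr_\N(\sigma_N)-\tr_\N(\rho_N)}=\frac{1}{2}\snorm{\tr_\N(\sigma_N)-\tr_\N(\rho_N)}_1$, since $\tr_\N(\sigma_N)-\tr_\N(\rho_N)$ is traceless (both are density operators on $\V_N$). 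This converts the penalty into a trace-norm distance and lets me write the inner maximization over $\{\sigma_Y,\Pi\}$ cleanly.

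**The case $x \in L$.** Here $GV(R)\geq c(|x|)=b+\frac{1}{2}\Delta$, so there exists $\sigma_Y^\star \in T_1$ achieving $\min_{\sigma_N \in T_2}\ip{R}{\sigma_Y^\star \otimes \sigma_N}\geq b+\frac{1}{2}\Delta$. I would fix this $\sigma_Y^\star$ and take $\Pi=0$ in the outer maximization, giving a lower bound $\equil{\mu}\geq \min_{\sigma_N}\bigl[\ip{R}{\sigma_Y^\star \otimes \sigma_N}-b+\frac{2}{\Delta}\cdot\frac{1}{2}\snorm{\tr_\N(\sigma_N)-\tr_\N(\rho_N)}_1\bigr]$ after using the $\Pi$-maximization — but since we are lower-bounding $\equil{\mu}$ and the penalty is nonnegative, I can simply keep the penalty term. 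The idea is that any $\sigma_N$ the minimizer might choose either lies in $T_2$ (costing nothing in penalty but then $\ip{R}{\sigma_Y^\star \otimes \sigma_N}\geq b+\frac{1}{2}\Delta$) or violates the trace constraint, in which case the penalty $\frac{1}{\Delta}\snorm{\tr_\N(\sigma_N)-\tr_\N(\rho_N)}_1$ grows. The quantitative step is to show that moving $\sigma_N$ a trace-distance $t$ away from $T_2$ can decrease $\ip{R}{\sigma_Y^\star\otimes\sigma_N}$ by at most $t$ (since $0\leq R\leq \I$ and trace distance bounds inner-product changes), while the penalty contributes $+\frac{2}{\Delta}\cdot\frac{t}{2}\cdot\text{(something)}$; balancing these with the factor $\frac{2}{\Delta}$ being large (as $\Delta$ is small) yields $\equil{\mu}\geq \frac{1}{2}\Delta - (\text{loss})\geq \frac{1}{4}\Delta$.

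**The case $x \notin L$.** Here $GV(R)\leq s(|x|)=b-\frac{1}{2}\Delta$, and I would bound $\equil{\mu}$ from above using the $\min$-$\max$ form. For every choice of $\{\sigma_Y,\Pi\}$, I need a single $\sigma_N$ that makes $h_1$ small. The natural candidate is to take $\sigma_N \in T_2$ (so the penalty vanishes), and use the game-value promise $\max_{\sigma_Y\in T_1}\ip{R}{\sigma_Y\otimes\sigma_N}$ — but the subtlety is that the outer player chooses $\sigma_Y$ and $\Pi$ first, so I must exhibit $\sigma_N\in T_2$ with $\ip{R}{\sigma_Y\otimes\sigma_N}\leq b-\frac{1}{2}\Delta$ for the given $\sigma_Y$, which follows from the $\min$-$\max$ characterization of $GV(R)$ applied to that fixed $\sigma_Y$. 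With such $\sigma_N$ and the penalty forced to zero, $h_1\leq (b-\frac{1}{2}\Delta)-b=-\frac{1}{2}\Delta$, giving $\equil{\mu}\leq -\frac{1}{2}\Delta$ directly.

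**Main obstacle.** The hard part is the $x\in L$ direction, specifically the quantitative trade-off showing the penalty dominates the possible gain from violating the constraint. I must carefully track how much the minimizing $\sigma_N$ can reduce $\ip{R}{\sigma_Y^\star\otimes\sigma_N}$ by leaving $T_2$ versus how much penalty it incurs, and verify that the choice $\alpha=\frac{2}{\Delta}$ makes the penalty steep enough that the net value stays at least $\frac{1}{4}\Delta$ rather than degrading all the way to $-\frac{1}{2}\Delta$. This requires relating the trace-distance in $\V_N$ (appearing in the penalty) to the trace-distance of the full states in $\V_N\otimes\N$ (controlling the inner product with $R$), which is exactly where a monotonicity-of-trace-distance-under-partial-trace argument enters; the factor discrepancy there is what forces the asymmetric bounds $\frac{1}{4}\Delta$ versus $-\frac{1}{2}\Delta$.
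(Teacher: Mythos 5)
Your $x\notin L$ direction is correct and is essentially the paper's argument: exhibit a No-prover response in $T_2$ so the penalty vanishes and the promise forces $h_1\leq -\frac{1}{2}\Delta$ (the paper uses one universal min-max optimal $\tilde{\sigma_N}$; you let $\sigma_N$ depend on $\sigma_Y$, which is equally valid). Both gaps lie in the $x\in L$ direction, which you correctly identify as the hard part. First, your derivation of the key inequality is invalid as stated: in the max-min order, once you fix $\{\sigma_Y^\star,\Pi=0\}$ the penalty term is identically zero, so all you obtain is $\equil{\mu}\geq\min_{\sigma_N\in\density{\V_N\otimes\N}}\left[\ip{R}{\sigma_Y^\star\otimes\sigma_N}-b\right]$, where $\sigma_N$ is completely unconstrained; this can be as small as $-b$ and is useless. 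You cannot "take $\Pi=0$" and simultaneously "use the $\Pi$-maximization", and for a \emph{fixed} nonzero $\Pi$ the penalty $\frac{2}{\Delta}\ip{\Pi}{\tr_\N(\sigma_N)-\tr_\N(\rho_N)}$ is not nonnegative either (the operator is traceless Hermitian, so the inner product can be negative). The conversion of the penalty into $\frac{1}{\Delta}\snorm{\tr_\N(\sigma_N)-\tr_\N(\rho_N)}_1$ via Lemma~\ref{lemma:trace_norm} is only legitimate when $\Pi$ is chosen \emph{after} $\sigma_N$, i.e., you must work with the min-max form $\equil{\mu}=\min_{\sigma_N}\max_{\{\sigma_Y,\Pi\}}h_1$ (equivalently, at the saddle point, which is what the paper does), and then lower-bound the inner maximum by plugging in $\sigma_Y^\star$ together with the optimal $\Pi$ for the given $\sigma_N$. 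This repair is easy, but the step as you wrote it does not yield your displayed bound.

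Second, and more seriously, the quantitative core rests on the wrong lemma. The loss in $\ip{R}{\sigma_Y^\star\otimes\sigma_N}$ is governed by the distance in the \emph{full} space, $s=\frac{1}{2}\snorm{\sigma_N-\tilde{\sigma_N}}_1$ to some $\tilde{\sigma_N}\in T_2$, whereas the penalty only pays the \emph{reduced} distance $t=\frac{1}{2}\snorm{\tr_\N(\sigma_N)-\tr_\N(\rho_N)}_1$. Monotonicity of trace distance under partial trace gives $t\leq s$, which is the wrong direction: it neither produces a nearby element of $T_2$ nor upper-bounds $s$. What is actually needed is Uhlmann's theorem in the form of Lemma~\ref{lemma:purification_fidelity}: there exists $\tilde{\sigma_N}$ with $\tr_\N(\tilde{\sigma_N})=\tr_\N(\rho_N)$ and $\fid(\tilde{\sigma_N},\sigma_N)=\fid(\tr_\N(\sigma_N),\tr_\N(\rho_N))$, which combined with Fuchs--van de Graaf (Lemma~\ref{lemma:trace_bound}) gives $s\leq\sqrt{2t-t^2}$. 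In particular, your claim that leaving $T_2$ by distance $t$ "can decrease the inner product by at most $t$" is false: the loss can be of order $\sqrt{2t}$, and it is exactly the minimization of $\frac{\Delta}{2}-\sqrt{2t}+\frac{2}{\Delta}t$ over $t$ (minimum $\frac{\Delta}{4}$, attained at $t=\frac{\Delta^2}{8}$) that produces the constant $\frac{1}{4}\Delta$; with a linear cost the asymmetry you flag would not even arise. So the balancing you defer to "monotonicity of trace distance under partial trace" cannot be completed by that tool, and without the Uhlmann-type construction the $x\in L$ bound does not go through.
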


\begin{proof}
\begin{itemize}
  \item If $x \in L$, choose $(\{\equil{\sigma_Y}, \equil{\Pi}\}, \equil{\sigma_N})$ to be the equilibrium point. By Lemma
  ~\ref{lemma:trace_norm}, we have
  \[
    \equil{\mu} = \ip{R}{\equil{\sigma_Y} \otimes
\equil{\sigma_N}} -b+\frac{1}{\Delta}
\|\tr_\N(\equil{\sigma_N})-\tr_\N({\rho_N})\|_1
  \] By Lemma~\ref{lemma:trace_bound}, there exists some $\tilde{\sigma_N}$
  such that $\tr_\N(\tilde{\sigma_N})=\tr_\N({\rho_N})$ and
  $\fid(\tilde{\sigma_N},\equil{\sigma_N}
)=\fid(\tr_\N(\equil{\sigma_N}),\tr_\N({\rho_N}))$.
  Let $s=\frac{1}{2}\snorm{\tilde{\sigma_N}-
  \equil{\sigma_N}}_1$ and $t=\frac{1}{2}\snorm{\tr_\N(\equil{\sigma_N})-\tr_\N({\rho_N})}_1$ , thus
  \begin{eqnarray*}
    \equil{\mu} &=& \ip{R}{\equil{\sigma_Y} \otimes
\tilde{\sigma_N}} -b + \ip{R}{\equil{\sigma_Y} \otimes
(\equil{\sigma_N}-\tilde{\sigma_N})} + \frac{1}{\Delta}
\|\tr_\N(\equil{\sigma_N})-\tr_\N({\rho_N})\|_1 \\
      & \geq & \frac{1}{2} \Delta- \frac{1}{2}
\|\equil{\sigma_N}-\tilde{\sigma_N}\|_1 +\frac{1}{\Delta}
\|\tr_\N(\equil{\sigma_N})-\tr_\N({\rho_N})\|_1= \frac{\Delta}{2}  -s +\frac{2}{\Delta} t \\
      & \geq & \frac{\Delta}{2}-s+\frac{2}{\Delta}(1-\sqrt{1-s^2}) \geq \frac{\Delta}{2}+\frac{2}{\Delta}-\sqrt{1+\frac{4}{\Delta^2}} \\
  & = & \frac{\Delta}{2}+\frac{1-\sqrt{1+(\frac{\Delta}{2})^2}}{\frac{\Delta}{2}} \geq
  \frac{1}{4}\Delta
  \end{eqnarray*}
  where the first inequality comes from Lemma~\ref{lemma:trace_norm}
  and the second inequality is due to Lemma~\ref{lemma:trace_bound}. The third inequality is due to the
  fact that $\min_{0\leq s\leq 1}
  \frac{1}{x}-s-\frac{1}{x}\sqrt{1-s^2}=\frac{1}{x}-\sqrt{1+\frac{1}{x^2}}$.
  The last inequality comes from the fact $1-\sqrt{1+x^2}\geq
  -\frac{1}{2}x^2$ for any $0\leq x \leq 1$.

  \item If $x \notin L$, by Definition~\ref{def:product_QRG(2)} , there exists a $\tilde{\sigma_N}$ satisfying
  $\tr_\N(\tilde{\sigma_N})\leq\tr_\N({\rho_N})$
  and for any $\sigma_Y \in T_1$, we have $\ip{R}{\sigma_Y \otimes
  \tilde{\sigma_N}}-b \leq -\frac{\Delta}{2}$. Thus,
  \[
    \equil{\mu} \leq \max_{\{\sigma_Y,\Pi\} \in S_1\times T} h_1(\{\sigma_Y,\Pi\},
    \tilde{\sigma_N}) \leq -\frac{\Delta}{2}
  \]
\end{itemize}

\end{proof}

Thus in order to tell whether $x$ is inside $L$, it suffices to
compute the value of $\equil{\mu}$ and make the decision according
to Theorem~\ref{lemma:promise_gap_qrg2}. In the following proof, we
will implicitly make use of the result in
Theorem~\ref{thm:qip2_binary_search} recursively at each iteration
as the solution to the oracle. 

\begin{cor} \label{cor:qrg2}
\class{product-QRG(2)}$\subseteq$ \class{PSPACE}
\end{cor}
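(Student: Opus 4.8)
The plan is to mirror the structure of Corollary~\ref{cor:qip2_pspace} but with the equilibrium value $\equil{\mu}$ of Framework~\ref{framework_2} playing the role that $\equil{\lambda_1}$ played for \class{QIP(2)}. The key input is Theorem~\ref{lemma:promise_gap_qrg2}, which guarantees a constant-factor gap in $\equil{\mu}$: at least $\frac{1}{4}\Delta$ in the yes case and at most $-\frac{1}{2}\Delta$ in the no case, with $\Delta=\Omega(1/poly(|x|))$. So it suffices to approximate $\equil{\mu}$ to precision $\delta<\frac{3}{8}\Delta$ (half the gap between $-\frac12\Delta$ and $\frac14\Delta$) in \class{NC(poly)}, and then compare the returned value against the midpoint of the two promise thresholds to decide membership in $L$.

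First I would exhibit the circuit that, given $x$, computes the initial pure states $\rho_Y,\rho_N$ and the POVM element $R$, together with the threshold $b=\frac{1}{2}(c(|x|)+s(|x|))$ and the penalty factor $\alpha=2/\Delta$; as in Corollary~\ref{cor:qip2_pspace} this is a product of polynomially many exponential-size matrices coming from the verifier's circuit, computable in \class{NC(poly)}. Next I would set up $h_1$ as an instance fitting the equilibrium machinery of Theorem~\ref{thm:mmw_generic_equilibrium}. The essential point is that the $\sigma_N$-player minimizes over the full density-operator set $\density{\V_N\otimes\N}$, so the multiplicative weight update method applies to that player, while the outer set $T_1\times T$ supplies the oracle $\mathcal{O}_2$. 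I would verify the two hypotheses of Theorem~\ref{thm:mmw_generic_equilibrium}: that the relevant loss operator has $\snorm{\cdot}_\infty=O(polylog)$ — here one must check that the $\frac{2}{\Delta}\ip{\Pi}{\tr_\N(\sigma_N)-\tr_\N(\rho_N)}$ term stays width-bounded, using $\Delta=\Omega(1/poly)$ together with boundedness of $R$ and $\Pi$ — and that each oracle response is computable in \class{NC}.

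The main obstacle is the oracle for the outer player. Unlike the \class{QIP(2)} case, maximizing over $\{\sigma_Y,\Pi\}\in T_1\times T$ does not reduce to a single spectral projection: the $\Pi$-part does (project onto the positive eigenspace of $\tr_\N(\sigma_N)-\tr_\N(\rho_N)$ by Lemma~\ref{lemma:trace_norm}), but the $\sigma_Y$-part requires maximizing $\ip{R}{\sigma_Y\otimes\sigma_N}$ over the channel-image set $T_1$, which is itself an inner optimization over density operators with a partial-trace constraint $\tr_\Y(\sigma_Y)=\tr_\Y(\rho_Y)$. This is exactly the \class{QIP(2)}-type feasibility/optimization problem, so I would discharge it by invoking Theorem~\ref{thm:qip2_binary_search} recursively at each of the $T=poly$ iterations, as the remark preceding this corollary already signals. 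I must confirm that this recursive call is itself in \class{NC} and that nesting it inside the $T$-iteration loop keeps the whole computation in \class{NC(poly)}; since each iteration calls an \class{NC} subroutine and there are only polylogarithmically many iterations after fixing the \class{NC(poly)} preprocessing, the composition stays within \class{NC(poly)}.

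Finally I would assemble the pieces: run the algorithm of Figure~\ref{fig:mmw_generic} adapted to $h_1$ with precision $\delta=\frac{1}{8}\Delta$ (or any $\delta<\frac{3}{8}\Delta$), obtaining an approximate $\bar\mu$ with $\abs{\bar\mu-\equil{\mu}}\leq\delta$, and decide $x\in L$ iff $\bar\mu>-\frac{1}{8}\Delta$; Theorem~\ref{lemma:promise_gap_qrg2} guarantees this test separates the two promises. All stages — preprocessing in \class{NC(poly)}, the recursively-oracled weight-update loop in \class{NC(poly)}, and the final comparison — compose in \class{NC(poly)}, and by \class{NC(poly)}$=$\class{PSPACE}~\cite{Borodin77} we conclude \class{product-QRG(2)}$\subseteq$\class{PSPACE}.
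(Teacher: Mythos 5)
Your proposal is correct and follows essentially the same route as the paper: run the multiplicative weight update on the $\sigma_N$-player over $\density{\V_N\otimes\N}$, build the oracle for the outer player by taking $\Pi^{(t)}$ as the spectral projection from Lemma~\ref{lemma:trace_norm} and obtaining $\sigma_Y^{(t)}$ by a recursive call to Theorem~\ref{thm:qip2_binary_search} (the paper makes this explicit via the effective POVM $\tr_{\V_N \otimes \N}((\I\otimes\sqrt{\sigma_N^{(t)}})R(\I\otimes\sqrt{\sigma_N^{(t)}}))$), check that the width is $O(poly(|x|))$ thanks to $\Delta=\Omega(1/poly(|x|))$, and conclude via the promise gap of Theorem~\ref{lemma:promise_gap_qrg2} and \class{NC(poly)}$=$\class{PSPACE}. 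The only differences are cosmetic: you spell out the decision threshold explicitly, while the paper instead writes out the loss matrix $N(\sigma_Y,\Pi)$ in closed form.
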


\begin{proof}
The proof of this corollary is quite similar to
Corollary~\ref{cor:qip2_pspace}. Whenever an input $x$ is given, we
can calculate $R$ in \class{NC(poly)} and approximate the
equilibrium value $\equil{\mu}$. By composing all these circuits, we
prove the whole circuit is in \class{NC(poly)} and thus in
\class{PSPACE}.

The only difference is the function $h_1$ is not in the same form as
we discussed in Theorem~\ref{thm:mmw_generic_equilibrium}. However,
by applying the general framework for the equilibrium value (see
Appendix~\ref{sec:MMW_survey}), we are able to calculate equilibrium
value $\equil{\mu}$ to high precision in \class{NC} as well.
Particularly, for each $\sigma^{(t)}_N$ generated, we will choose
$\sigma^{(t)}_Y$ to be the return value by
Theorem~\ref{thm:qip2_binary_search} where the POVM for SDP (I) is
$\tr_{\V_N \otimes N} ((\I_{\V_Y \otimes Y} \otimes
\sqrt{\sigma^{(t)}_N})R(\I_{\V_Y \otimes Y} \otimes
\sqrt{\sigma^{(t)}_N}))$. Furthermore, we will choose $\Pi^{(t)}$ to
be the projection onto the positive eignspace of
$\tr_\N(\sigma^{(t)}_N)-\tr_\N({\rho_N})$.  Similarly to the
algorithm in Figure~\ref{fig:mmw_generic}, we will choose
\[
  N(\sigma_Y, \Pi)=\tr_{\V_Y \otimes Y} ((\sqrt{\sigma_Y} \otimes \I_{\V_N \otimes N})R(\sqrt{\sigma_Y} \otimes \I_{\V_N \otimes
  N}))
  + \frac{2}{\Delta} \Pi \otimes \I_\N
  -(\frac{2}{\Delta}\ip{\Pi}{\tr_\N(\rho_N)}+b)\I_{\V_N \otimes N}
\]
Since $\Delta=\Omega(1/poly(|x|))$, we have $\snorm{N(\sigma_Y,
\Pi)}_\infty$ is always bounded by $O(poly(|x|))$. Thus the whole
algorithm can be accomplished in \class{NC(poly)}.

\end{proof}

\section{Conclusions and Open Problems} \label{sec:conclusion}
In this paper we demonstrate how the Framework~\ref{framework_1} can
be used to solve a class of SDPs. Moreover, by the examples of
\class{QIP(2)} and \class{QMAM}, we demonstrate how the conversion
from approximate to exact feasibility can be done in our framework.
The generic form in Theorem~\ref{thm:mmw_generic_equilibrium} also
illustrates the potential of our framework to solve other SDPs. In
addition, our example of \class{product-QRG(2)} illustrates how the
Framework~\ref{framework_2} can be used to calculate the equilibrium
value of more complicated form.

However, there are several limits and unknown facts about our two
frameworks. As mentioned in~\cite{JainJUW09}, it might be impossible
to solve any SDPs in \class{NC}. Thus, we cannot hope to include all
possible SDPs into our framework. Understanding what kind of
constraints for SDPs can be solved via our framework is a major open
problem. So far, we have positive results when the constraints are
partial trace and its simple combination. It can be easily verified
that the constraint $\Psi=\tr_\A(U \cdot U^\ast)$ for some unitary
$U$ can also be solved in the similar way.

Another open problem is whether any generalization of the
multiplicative weight update method can be found to improve the
results based on the multiplicative weight update method. The recent
survey paper~\cite{Hazan10} could provide some insights into that.

Finally, it is still open whether \class{QRG(2)}=\class{PSPACE}
while its classical counterpart \class{RG(2)}=\class{PSPACE}.

\section*{Acknowledgement}
The author is grateful to Gus Gutoski, Zhengfeng Ji, Rahul Jain,
Richard Cleve, Sarvagya Upadhyay and John Watrous for helpful
discussions. Particularly, the penalization idea in
Framework~\ref{framework_2} is inspired by discussion with Gus
Gutoski and the weight factor in the penalization is suggested by
John Watrous. The fact that the constraint $\Psi=\tr_\A(U \cdot
U^\ast)$ for some unitary $U$ can also be solved in the similar way
is noticed by Zhengfeng Ji. The author is grateful to Gus Gutoski
and Zhengfeng Ji for helpful comments on the manuscript. The author
also wants to thank the hospitality and invaluable guidance of John
Watrous when he was visiting the Institute for Quantum Computing,
University of Waterloo. The research was completed during this visit
and was supported by the Canadian Institute for Advanced Research
(CIFAR).

\bibliographystyle{alpha}

\appendix

\section{Extended Preliminaries  } \label{sec:appendix_prelim}

\subsection{Fundamentals of Quantum Information}
\label{sec:prelim_fundamentals_qi} In this section, we will provide
a summary of the fundamental notations and facts in quantum
information. We assume the readers are familiar with these
knowledge, and most part of this section is meant to make clear the
terminology and well-known facts used in this paper. For those
readers who are not familiar with these concepts, we recommend them
to refer to ~\cite{Bhatia97,KitaevW02,NielsenC00,Watrous08}. Our
notation basically follows the notation in Watrous's lecture
notes~\cite{Watrous08}.

A \emph{quantum register} refers to a collection of qubits, usually
represented by a complex Euclidean spaces of the form
$\X=\mathbb{C}^\Sigma$ where $\Sigma$ refers to some finite
non-empty set of the possible states.

For any two complex Euclidean spaces $\X,\Y$, let $\lin{\X,\Y}$
denote the space of all linear mappings(or operators) from $\X$ to
$\Y$ ($\lin{\X}$ short for $\lin{\X,\X}$). An operator $A \in
\lin{\X,\Y}$ is a \emph{linear isometry} if $A^{\ast}A=\I_\X$ where
$A^{\ast}$ denotes the adjoint(or conjugate transpose) of $A$. The
set of linear isometries is denoted by $\unitary{\X,\Y}$ then.

 An operator $A\in \lin{\X}$ is \emph{Hermitian}, the set of which is denoted by $\herm{\X}$, if
$A=A^{\ast}$. The eigenvalues of a Hermitian operator are always
real. For $n=\dim{\X}$, we write $ \lambda_1(A)\geq \lambda_2(A)\geq
\cdots \geq \lambda_n(A)$ to denote the eigenvalues of $A$ sorted
from largest to smallest. An operator $P\in \lin{\X}$ is
\emph{positive semidefinite}, the set of which is denoted by
$\pos{\X}$, if $P$ is Hermitian and all of its eigenvalues are
nonnegative, namely $\lambda_n(P)\geq 0$. An operator $\rho \in
\pos{\X}$ is a \emph{density operator}, the set of which is denoted
by $\density{\X}$,  if it has trace equal to 1. A density operator
$\rho \in \density{\X}$ is said to be \emph{pure} if it has rank
equal to one.  An operator $\Pi \in \pos{\X}$ is a \emph{projection}
if $\Pi$ projects onto some subspace of $\X$. Furthermore, such
operators only have eigenvalues of 0 or 1.

The Hilbert-Schmidt inner product on $\lin{\X}$ is defined by
\[
   \ip{A}{B}=\tr{A^{\ast}B}
\]
for all $A,B \in \lin{\X}$.

A \emph{super-operator}(or quantum channel) is a linear mapping of
the form
\[
  \Psi : \lin{\X} \rightarrow \lin{\Y}
\]
A super-operator $\Psi$ is said to be \emph{positive} if $\Psi(X)
\in \pos{\Y}$ for any choice of $X \in \pos{\X}$, and is
\emph{completely positive} if $\Psi \otimes \I_{\lin{\Z}}$ is
positive for any choice of a complex vector space $\Z$.  The
super-operator $\Psi$ is said to be \emph{trace-preserving} if
$\tr{\Psi(X)} =\tr{X}$ for all $X \in \lin{\X}$. A super-operator
$\Psi$ is \emph{admissable} if it is completely positive and
trace-preserving. Admissable super-operators represent the
discrete-time changes in quantum systems that, in principle, can be
physically realized.

 We refer to \emph{measurements}, or precisely POVM-type
measurements as a collection of positive semidefinite operators
\[
   \{P_a : a \in \Sigma\} \subset \pos{\X}
\]
satisfying the constraint $\sum_{a \in \Sigma} P_a =\I_\X$. Here
$\Sigma$ refers to a finite, nonempty set of \emph{measurement
outcomes}. If a quantum state represented by $\rho \in \density{\X}$
is measured with respect to this measurement, then each outcome $a
\in \Sigma$ will be observed with probability $\ip{P_a}{\rho}$.

The \emph{trace norm} of an operator $A \in \lin{\X}$ is denoted by
$\snorm{A}_1$ and defined to be
\[
  \snorm{A}_1= \tr \sqrt{ A^\ast A}
\]
When $A$ is Hermitian, we have
\begin{equation} \label{def:eqn_trace_ip}
  \snorm{A}_1= \max\{ \ip{P_0-P_1}{A}: P_0,P_1 \in \pos{\X},
  P_0+P_1=\I_\X\}
\end{equation}

 The \emph{spectral norm} of an operator $A \in \lin{\X}$ is
defined to be
\[
   \snorm{A}_\infty =\max \{ \snorm{Au}: u \in \X, \snorm{u}=1\}
\]

Given two positive semidefinite operators $P,Q \in \pos{\X}$, we
define the \emph{fidelity} between $P$ and $Q$ as
\[
  \fid(P, Q)=\snorm{\sqrt{P}\sqrt{Q}}_1
\]
When $P,Q$ are density operators, due to Fuchs-van de Graaf
inequality, we have
\begin{equation} \label{def:eqn_van_de_Graaf}
 1-\frac{1}{2}\snorm{P-Q}_1 \leq \fid(P,Q) \leq
 \sqrt{1-\frac{1}{4}\snorm{P-Q}_1^2}
\end{equation}

Suppose $\rho \in \density{\X}$ is a density operator, the
\emph{purification} of $\rho $ in $\X \otimes \Y$ is any pure
density operator $uu^\ast \in \density{\X \otimes \Y}$ for which
$\tr_\Y(uu^\ast)=\rho$.

The following contains the proof of lemmas shown in
Section~\ref{sec:lemma_facts}

\noindent \textbf{Proof of Lemma~\ref{lemma:purification_fidelity}}
(Please note that this lemma was originally proved in many places.
The following proof follows the one in~\cite{JainUW09}. The only
reason to include this proof is because we will use it to prove
Lemma~\ref{lemma:compute_purification}.)
\begin{proof}
First, by the monotonicity of the fidelity function under partial
trace, we have for any $\sigma_2 \in \density{\A \otimes \B}$ such
that $\tr_B \sigma_2=\rho_2$ the inequality
$\mathcal{F}(\rho_1,\rho_2)\geq \mathcal{F}(\sigma_1,\sigma_2)$
always holds. Thus, it suffices to show that equality can be
achieved.

Let $V \in \unitary{\A}$ such that $\sqrt{\rho_1}\sqrt{\rho_2}V$ is
positive semidefinite. Since for fidelity function we have
$\mathcal{F}(\rho_1,\rho_2)=\snorm{\sqrt{\rho_1}\sqrt{\rho_2}}_1$,
then for such a $V$ it holds that $\mathcal{F}(\rho_1,\rho_2)=\tr
(\sqrt{\rho_1}\sqrt{\rho_2}V)$. Now let $\C=\A \otimes \B$ and
$\ket{u_1} \in \A \otimes \B \otimes \C$ be the purification of
$\sigma_1$, in particular, $\ket{u_1}$ is chosen to be
\[
   \ket{u_1}= \vec ( \sqrt{\sigma_1})
\]
By rearranging the coefficients we can find a $X \in \lin{\B \otimes
\C, \A}$ such that $\vec(X)=\ket{u_1}$. Since $\ket{u_1}$ is also a
purification of $\rho_1$, there must exist a linear isometry $U \in
\unitary{\A, \B \otimes \C}$ such that,
\[
  X= \sqrt{\rho_1} U^\ast
\]

Finally, let $\ket{u_2}=\vec(\sqrt{\rho_2}VU^\ast) \in \A \otimes \B
\otimes \C$ where $V, U$ are obtained above respectively. It is easy
to see that $\ket{u_2}$ is a purification of $\rho_2$ in the space
$\A \otimes \B \otimes \C$. Thus, we choose
$\sigma_2=\tr_\C(\ket{u_2}\bra{u_2})$ and it will hold that
\[
  \fid(\sigma_1,\sigma_2) \geq
  |\ip{\vec(\sqrt{\rho_1}U^\ast)}{\vec(\sqrt{\rho_2}VU^\ast)}| =
  |\ip{\sqrt{\rho_1}U^\ast}{\sqrt{\rho_2}VU^\ast}|=\tr(\sqrt{\rho_1}\sqrt{\rho_2}V)=\fid(\rho_1,\rho_2)
\]
\end{proof}

\noindent \textbf{Proof of Lemma~\ref{lemma:compute_purification}}
\begin{proof}
The proof of the Lemma~\ref{lemma:purification_fidelity} actually
gives you a way to construct such a $\sigma_2$ given
$\rho_1,\rho_2,\sigma_1$. Let us review the important steps in the
proof again with more attention to the computation of each
intermediate quantity.

In the first step, we need to calculate a $V \in \unitary{\A}$ such
that $\sqrt{\rho_1}\sqrt{\rho_2}V$ is positive semidefinite. This
can be done by calculating the singular value decomposition of
$\sqrt{\rho_1}\sqrt{\rho_2}$ and let $V=\I-2P$ where $P$ is the
projection onto the subspace with negative singular values.

The second step calculates $X$ such that
$\vec(X)=\ket{u_1}=\vec(\sqrt{\sigma_1})$. This can be done by
simply rearranging the coefficients in the entries of
$\sqrt{\sigma_1}$. In order to get $U \in \unitary{\A ,\B \otimes
\C}$, we can calculate the singular value decomposition of
$\sqrt{\rho_1}$ and get the inverse( or pseudo-inverse) of
$\sqrt{\rho_1}$. Then $U=X^\ast (\sqrt{\rho_1}^{-1})^\ast$.

Once we have $U$ and $V$, we can easily calculate $\sigma_2$ by
using the formula
\[
\sigma_2=\tr_\C (\vec(\sqrt{\rho_2}VU^\ast)
\vec(\sqrt{\rho_2}VU^\ast)^\ast) \]
 Due to the fact that fundamental operations of matrix and the singular value decomposition can
be done in \class{NC}(see Fact~\ref{fact:fundamentals} and
Fact~\ref{fact:svd}) and the fact we can compose these \class{NC}
circuits easily, then we can conclude that $\sigma_2$ can be
calculated in \class{NC} given the classical representations of
$\rho_1,\rho_1$ and $\sigma_1$ as input.
\end{proof}

\noindent \textbf{Proof of Lemma~\ref{lemma:trace_bound}}
\begin{proof}
This is only a simple application of the Fuchs-van de Graaf
Inequalities(Eq[\ref{def:eqn_van_de_Graaf}]). Namely, we have
\[
  1-s \leq \fid(\rho_1, \rho_2) \leq \sqrt{1-s^2} \text{ and } 1-t
  \leq \fid(\sigma_1,\sigma_2) \leq \sqrt{1-t^2}
\]
Given the fact $\fid(\rho_1,\rho_2)=\fid(\sigma_1, \sigma_2)$, it
easily follows that
\[
  1-s\leq \sqrt{1-t^2} \text {  and } 1-t \leq \sqrt{1-s^2}
\]
\end{proof}

\noindent \textbf{Proof of Lemma~\ref{lemma:admissable_connect}}
\begin{proof}
This lemma is a standard fundamental fact of quantum information
that follows from the unitary equivalence of purifications and the
fact $\rho_1$ is a pure state.
\end{proof}

\noindent \textbf{Proof of Lemma~\ref{lemma:trace_norm}}

\begin{proof}
Given the fact about the trace norm of any Hermitian operator $A$ in
Equation~\ref{def:eqn_trace_ip}, we have
\begin{eqnarray*}
  \snorm{A}_1  &=& \max_{\Pi: 0 \leq \Pi \leq \I} \ip{A}{ \Pi -(\I-\Pi)} \\
    &=& \max_{\Pi: 0 \leq \Pi \leq \I} 2\ip{A}{\Pi}-\tr (A) \\
    &=& \max_{\Pi: 0 \leq \Pi \leq \I} 2\ip{A}{\Pi}
\end{eqnarray*}
Thus,
\[
  \max_{\Pi: 0\leq \Pi \leq \I} \ip{A}{\Pi}=\frac{1}{2}\snorm{A}_1
\]
For the other equation, choose $A'=-A$, and apply the equation
above, we have
\[
 \max_{\Pi: 0\leq \Pi \leq \I} \ip{A'}{\Pi}=\max_{\Pi: 0\leq \Pi \leq \I}
 \ip{-A}{\Pi}=\frac{1}{2}\snorm{A'}_1=\frac{1}{2}\snorm{A}_1
\]
Thus
\[
 \min_{\Pi: 0\leq \Pi \leq \I} \ip{A}{\Pi}=-\frac{1}{2}\snorm{A}_1
\]
\end{proof}

\subsection{Facts on NC and parallel matrix computations}
\label{sec:NC_parellel}

We denote by \class{NC} the class of promise problems computed by
the logarithmic-space uniform Boolean circuits with poly-logarithmic
depth. Furthermore, we denote by \class{NC(poly)} the class of
promise problems computed by the polynomial-space uniform Boolean
circuits with polynomial depth. Since it holds that
\class{NC(poly)}$=$ \class{PSPACE}~\cite{Borodin77}, thus in order
to simulate the algorithm above in PSPACE, it suffices to prove that
we can simulate the algorithm in \class{NC(poly)}.

There are a few facts about these classes which are useful in our
discussion. The first fact is the functions in these classes compose
nicely. It is clear that if $f \in$ \class{NC(poly)} and $g \in$
\class{NC}, then their composition $g\circ f$ is in
\class{NC(poly)}, which follows from the most obvious way of
composing the families of circuits. Another useful fact is that many
computations involving matrices can be performed by NC algorithms
(Please refer to the survey~\cite{vzGathen93} which describes NC
algorithms for these tasks). Especially, we will make use of the
fact that matrix exponentials and singular value decompositions can
be approximated to high precision in NC. We will directly cite the
well-prepared form of these facts in~\cite{JainJUW09}.

\begin{fact} \label{fact:fundamentals}
Fundamental operations like addition, multiplication of matrices can
be done in \class{NC}
\end{fact}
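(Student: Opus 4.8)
The plan is to reduce each matrix operation to a bounded-depth combination of the elementary arithmetic primitives that are already known to lie in \class{NC}$^1$: addition of two numbers, multiplication of two numbers, and iterated addition (summing $n$ numbers at once). First I would recall the two structural facts stated in this section: \class{NC} consists of problems computed by logarithmic-space uniform Boolean circuit families of polynomial size and poly-logarithmic depth, and such families compose, so that stacking a constant or poly-logarithmic number of \class{NC} layers keeps the total depth poly-logarithmic and the total size polynomial. Everything then comes down to exhibiting each matrix operation as a shallow circuit built from the primitive operations above.

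For the addition of two $n \times n$ matrices $A$ and $B$, the entry $(A+B)_{ij} = A_{ij} + B_{ij}$ depends only on the two corresponding inputs, so all $n^2$ entries are produced by mutually independent subcircuits running in parallel; since addition of two $b$-bit numbers is in \class{NC}$^1$, the whole operation has poly-logarithmic depth and polynomial size. For multiplication $C = AB$ with $C_{ij} = \sum_{k=1}^{n} A_{ik}B_{kj}$, I would use two layers. The $n^3$ products $A_{ik}B_{kj}$ are mutually independent and each is computed by an \class{NC}$^1$ multiplier in parallel; then each of the $n^2$ outputs is the sum of $n$ such products, computed by an iterated-addition circuit. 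Composing the product layer with the summation layer gives poly-logarithmic depth and polynomial size overall. Complex entries cost only a constant factor, since each complex multiplication expands into a fixed number of real multiplications and additions.

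The step I expect to be the main obstacle is the iterated addition at the heart of the multiplication circuit: summing $n$ numbers naively is a sequential chain of linear depth, so one must argue that the $n$-fold sum collapses to poly-logarithmic depth. This is exactly where a balanced adder tree of depth $O(\log n)$ is needed, together with the classical fact that iterated addition of $n$ numbers is in \class{NC}$^1$ via carry-save (3-to-2) reduction, so that the carries are resolved only once at the end rather than propagated at every level. This is the only genuinely nontrivial ingredient; the remaining work is routine bookkeeping to verify that the bit-lengths grow only polynomially across the layers, so that each primitive stays within its stated \class{NC} bound. Since the paper defers precision to Appendix~\ref{sec:precision_issue}, here I would treat the entries as fixed-length representations and simply cite~\cite{vzGathen93} for the primitive circuits rather than reconstruct them.
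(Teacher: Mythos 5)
Your proof is correct, and it is essentially the paper's own treatment: the paper states this as a Fact with no proof, discharging it by citation to the survey~\cite{vzGathen93} (in the form prepared in~\cite{JainJUW09}), which is exactly where the construction you sketch --- entrywise-parallel addition, and multiplication via parallel products followed by iterated addition with carry-save adder trees --- is recorded. Your closing decision to cite~\cite{vzGathen93} for the primitive circuits rather than reconstruct them matches what the paper does.
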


\begin{fact} \label{fact:matrix_exponential}
\underline{Matrix exponentials}: there exists \class{NC} algorithms
such that
\begin{center}
\begin{tabular}{lp{5.5in}}
{\it Input:} & An $n\times n$ matrix $M$, a positive rational number
$\eta$,
and an integer $k$ expressed in unary notation (i.e., $1^k$).\\
{\it Promise:} & $\norm{M} \leq k$.\\ {\it Output:} & An $n\times n$
matrix $X$ such that $\norm{\exp(M) - X} < \eta$.
\end{tabular}
\end{center}
\end{fact}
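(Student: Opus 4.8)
The plan is to approximate $\exp(M)$ by a truncated Taylor series and to show that forming this truncation can be carried out by a logarithmic-space uniform circuit family of poly-logarithmic depth. Concretely, I would output
\[
  X = \sum_{j=0}^{d} \frac{M^j}{j!},
\]
where $M^0 = \I$ and the truncation degree $d$ is fixed in the error analysis below.

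First I would bound the truncation error. Since the norm is submultiplicative, the promise $\norm{M}\leq k$ gives $\norm{M^j}\leq \norm{M}^{j}\leq k^{j}$, so
\[
  \norm{\exp(M)-X}=\norm{\sum_{j>d}\frac{M^j}{j!}}\leq \sum_{j>d}\frac{k^j}{j!}.
\]
The ratio of consecutive terms of this series is $k/(j+1)$, which is at most $1/2$ once $j\geq 2k$; hence for $d\geq 2k$ the tail is at most twice its leading term $k^{d+1}/(d+1)!$, which by Stirling's estimate decays faster than $2^{-d}$ as soon as $d\geq 2ek$. Choosing $d=O\!\left(k+\log(1/\eta)\right)$ therefore forces the tail below $\eta$, and this $d$ is polynomial in the input size because $k$ is presented in unary and $\eta$ is rational.

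Next I would argue that $X$ lies in \class{NC}. The integers $j!$ for $j\leq d$ are iterated products of at most $d$ numbers and are computable in \class{NC}; each matrix power $M^j$ is an iterated product of $j$ copies of $M$, which by Fact~\ref{fact:fundamentals} together with the standard balanced-tree (or parallel-prefix) scheme is computable in depth $O(\log d)$ worth of pairwise matrix multiplications, hence in \class{NC}. Scaling each $M^j$ by the rational $1/j!$ and summing the $d+1$ matrices along a balanced binary tree of additions costs a further $O(\log d)$ depth. Composing these poly-logarithmic-depth subcircuits, and noting that the whole construction is easily generated in logarithmic space, yields a single \class{NC} circuit outputting $X$.

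The main obstacle is keeping the bit-complexity of the intermediate quantities polynomial, since otherwise the \class{NC} bound degrades. This is where the unary promise is essential: $\norm{M}\leq k$ forces $|(M^j)_{ab}|\leq \norm{M^j}\leq k^{j}\leq k^{d}$, so every entry has magnitude (and, after clearing to a common denominator, bit-length) bounded by a polynomial in the input; the denominators $j!$ are likewise of polynomial bit-length. The remaining rounding analysis for approximate rather than exact rational arithmetic is exactly the kind of detail deferred to Appendix~\ref{sec:precision_issue}, and at the level of this statement it suffices to invoke the high-accuracy \class{NC} implementations of the fundamental matrix operations cited from~\cite{JainJUW09}.
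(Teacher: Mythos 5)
Your proof is correct, but you should know that the paper does not actually prove this statement at all: it is imported verbatim as a known fact, with the text ``we will directly cite the well-prepared form of these facts in~\cite{JainJUW09}'' and the underlying parallel linear algebra attributed to the survey~\cite{vzGathen93}. What you have written is essentially the standard argument that sits behind that citation: truncate the Taylor series at degree $d = O(k+\log(1/\eta))$, use submultiplicativity together with the promise $\norm{M}\leq k$ to bound the tail by $2\,(ek/(d+1))^{d+1} < \eta$, and evaluate the truncation with balanced trees of products and sums, which is \class{NC} by Fact~\ref{fact:fundamentals}. Your tail estimate and your choice of $d$ check out, and your observation that the \emph{unary} encoding of $k$ is what keeps both the degree $d$ and the intermediate bit-lengths polynomial is precisely the reason the promise is stated that way; that is the one point a reader could otherwise miss. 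The only soft spot is the final step, where you say the rounding analysis can be delegated to the cited implementations: that is acceptable in the context of this paper (which defers all precision issues to Appendix~\ref{sec:precision_issue}), but for a fully self-contained proof you would note that exact rational arithmetic suffices inside the circuit --- over a common denominator the entries of $M^j$ have bit-length $O(d\cdot s)$ for input bit-length $s$, and $j!\leq d!$ has bit-length $O(d\log d)$ --- so rounding only enters when converting the exact rational output to the required approximation of $\exp(M)$. In short, your argument is a valid, self-contained substitute for the paper's citation: the paper gives a pointer, and you give the proof that the pointer refers to.
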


\begin{fact} \label{fact:svd}
\underline{Singular value decompositions}: there exists \class{NC}
algorithms such that
\begin{center}
\begin{tabular}{lp{5.5in}}
{\it Input:} & An $m\times n$ matrix $M$ and a positive rational
number $\eta$. \\ {\it Output:} & An $m\times m$ unitary matrix $U$
,$n\times n$ unitary matrix $V$ and an $m\times n$ real diagonal
matrix $\Lambda$ such that
\[
\norm{M - U \Lambda V^\ast} < \eta.
\]
\end{tabular}
\end{center}

\end{fact}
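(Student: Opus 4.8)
The plan is to reduce the singular value decomposition of $M$ to a Hermitian eigenvalue problem and then to assemble three parallel linear-algebra primitives, each established in the literature surveyed in~\cite{vzGathen93}: the \class{NC} computation of the characteristic polynomial, \class{NC} approximate root-finding, and the \class{NC} evaluation of spectral projectors. First I would form $A = M^\ast M$, which is an $n \times n$ element of $\pos{\complex^n}$ whose eigenvalues are exactly the squared singular values $\sigma_i^2$ of $M$. Forming $A$ is a single matrix product, hence in \class{NC} by Fact~\ref{fact:fundamentals}, and every subsequent step operates on this Hermitian matrix.

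Next I would compute the characteristic polynomial $p(x) = \det(x\I - A)$ in \class{NC}. This is the content of the Csanky/Berkowitz method: the coefficients of $p$ are obtained from the power sums $\tr(A^k)$ through Newton's identities, and all the powers $A^k$ together with their traces are computable by the parallel matrix operations of Fact~\ref{fact:fundamentals}. Because $A$ is Hermitian, all roots of $p$ are real, and I would invoke a parallel approximate root-finding procedure to produce rational approximations $\tilde\lambda_1 \geq \cdots \geq \tilde\lambda_n$ of these roots to any prescribed accuracy; setting $\tilde\sigma_i = \sqrt{\tilde\lambda_i}$ yields the diagonal of $\Lambda$. This root-finding step is the technically heaviest cited ingredient rather than a routine reduction.

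It remains to produce the orthonormal bases $V$ and $U$. Given the approximate eigenvalues, I would group them into clusters separated by gaps and, for each cluster, realize the spectral projector onto the corresponding eigenspace as a low-degree polynomial in $A$ that is close to $1$ on the cluster and close to $0$ on the rest of the spectrum; such a polynomial is evaluated by repeated squaring and summation, i.e.\ by a composition of the \class{NC} primitives of Fact~\ref{fact:fundamentals}. Parallel Gram--Schmidt on the columns of each projector produces orthonormal columns of $V$, after which I would set $u_i = M v_i / \tilde\sigma_i$ for each nonzero $\tilde\sigma_i$ and complete the result to an orthonormal basis of $\complex^m$ to obtain $U$. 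A final product checks $\norm{M - U \Lambda V^\ast} < \eta$, and since the whole procedure is a composition of poly-logarithmically many \class{NC} primitives, it lies in \class{NC}.

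The hard part will be twofold. First, approximating all roots of a degree-$n$ polynomial to high precision in poly-logarithmic depth is a genuine theorem that must be cited rather than derived. Second, the extraction of eigenvectors is delicate precisely when eigenvalues are nearly coincident: the kernel of $A - \tilde\lambda \I$ is ill-conditioned across a cluster of close eigenvalues, which is the reason to work with the projector onto an entire cluster---where the projector itself is stable---instead of with individual eigenvectors. Consequently the error analysis bounding the final $\eta$ must be carried out in terms of the separations between clusters, and controlling these gaps, together with the polynomial degree needed to resolve them, is where the careful estimates reside.
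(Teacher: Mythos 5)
The paper contains no proof of this statement for you to be compared against: Fact~\ref{fact:svd} is imported verbatim (in the author's words, in its ``well-prepared form'') from~\cite{JainJUW09}, with the underlying parallel linear algebra delegated wholesale to the survey~\cite{vzGathen93}. What you have written is therefore a reconstruction of the standard literature proof rather than an alternative to anything in this paper, and your route is indeed the canonical one in the cited sources: reduce to the Hermitian eigenproblem for $M^\ast M$; compute the characteristic polynomial in \class{NC} by the Csanky/Berkowitz method (Newton's identities plus parallel matrix powers, all covered by Fact~\ref{fact:fundamentals}); invoke a parallel root-approximation theorem for real-rooted polynomials (this genuinely must be cited --- e.g.\ the Ben-Or--Feig--Kozen--Tiwari algorithm --- exactly as you say); and recover invariant subspaces through spectral projectors onto clusters of nearby eigenvalues rather than through individual eigenvectors. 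You also correctly locate the two places where the real work lives, namely parallel root-finding and the ill-conditioning of eigenvectors across nearly coincident eigenvalues.

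Three details would still need explicit treatment before your sketch closes. First, $u_i = M v_i/\tilde\sigma_i$ is unusable when $\tilde\sigma_i$ is tiny: you must threshold, declaring singular values below roughly $\eta/poly(m,n)$ to be zero and completing $U$ by an arbitrary orthonormal extension, which is harmless because those directions contribute at most the threshold to $\norm{M - U \Lambda V^\ast}$. Second, one cannot always partition the spectrum into clusters separated by large gaps --- the eigenvalues may form a chain of small consecutive gaps spanning a wide range --- so the standard fix is to merge at a scale $\delta$, accept cluster diameters as large as $n\delta$, and charge the diameter (not the gap) to the final error $\eta$; relatedly, the separating polynomial generally needs degree polynomial in $n$ and $1/\delta$, not ``low'' degree, though polynomial degree is still \class{NC}-evaluable by repeated squaring. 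Third, with rational arithmetic the output matrices can only be approximately unitary (exact normalization requires square roots), so the Fact must be read with that convention, as it is in~\cite{JainJUW09} and as the paper itself does by deferring all such truncation arguments to Appendix~\ref{sec:precision_issue}.
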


\subsection{Multiplicative Weights Update Method}
\label{sec:MMW_survey}

 The \emph{multiplicative weights update method} introduced in Section \ref{sec:introduction} is a framework
for algorithm design(or meta-algorithm) that works as the one shown
in Fig~\ref{fig:mmw}. This kind of framework involves lots of
technical details and we refer the curious reader to the survey and
the PhD thesis~\cite{Kale07} mentioned in the introduction. However,
for the sake of completeness, we provide the main result which will
be useful in our proof. It should be noticed that $\{M^{(t)}\}$ is
the freedom we have in this framework.

%

\begin{figure}[t]
\noindent\hrulefill
\begin{mylist}{8mm}
\item[1.] Initialization: Pick a fixed $\varepsilon\leq \frac{1}{2}$,
and let $W^{(1)}=\I_{\X} \in \lin{\X}$, $D=\dim{\X}$.
\item[2.]
Repeat for each $t = 1,\ldots,T$:
\begin{mylist}{8mm}
\item[(a)] Let the density operator
$\rho^{(t)}=W^{(t)}/\tr{W^{(t)}}$
\item[(b)]  Observe the loss matrix $M^{(t)} \in \lin{\X}$ which satisfies $-\I_\X \leq M^{(t)}\leq 0$ or $0\leq M^{(t)}\leq \I_\X$,
 update the weight matrix as follows:
\[
   W^{(t+1)}=exp(-\varepsilon \sum_{\tau=1}^t M^{(\tau)})
\]
\end{mylist}
\end{mylist}
\noindent\hrulefill \caption{The Matrix Multiplicative Weights
Update method.} \label{fig:mmw}
\end{figure}

\begin{theorem} \label{thm:mmw_simple}
Assume $0\leq M^{(t)} \leq \I$ for all $t$, after $T$ rounds, the
algorithm in Fig~\ref{fig:mmw} guarantees that, for any $\rho^{\ast}
\in \density{\X}$, we have
\begin{equation}\label{eqn:mmw_simple}
     (1-\epsilon)\sum_{t=1}^T
    \ip{\rho^{(t)}}{M^{(t)}} \leq \ip{\rho^{\ast}}{\sum_{t=1}^T
    M^{(t)}} + \frac{lnD}{\epsilon}
\end{equation}
\end{theorem}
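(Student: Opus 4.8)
The plan is to run the standard potential-function analysis, tracking the scalar quantity $\Phi^{(t)} = \tr W^{(t)} = \tr \exp(-\varepsilon \sum_{\tau=1}^{t-1} M^{(\tau)})$ across the rounds. I would first record the two easy endpoints. From the initialization $W^{(1)} = \I_\X$ we have $\Phi^{(1)} = \tr \I_\X = D$. For a lower bound on the final potential in terms of the comparator, note that $\tr \exp(A) \geq e^{\lambda_1(A)}$ (keeping only the top eigenvalue) and $\lambda_1(A) = \max_{\sigma \in \density{\X}} \ip{\sigma}{A} \geq \ip{\rho^{\ast}}{A}$ for any fixed $\rho^{\ast} \in \density{\X}$; taking $A = -\varepsilon \sum_{t=1}^T M^{(t)}$ gives $\Phi^{(T+1)} \geq \exp(-\varepsilon \ip{\rho^{\ast}}{\sum_{t=1}^T M^{(t)}})$.

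The core of the argument is a per-round upper bound on the multiplicative growth of $\Phi^{(t)}$, and here lies the only genuinely nonroutine ingredient: the matrices $M^{(\tau)}$ need not commute, so the exponential of their sum does not split into a product of exponentials. The tool that rescues the step is the Golden--Thompson inequality $\tr \exp(C + E) \leq \tr(\exp(C)\exp(E))$, which I would apply with $C = -\varepsilon \sum_{\tau=1}^{t-1} M^{(\tau)}$ and $E = -\varepsilon M^{(t)}$ to obtain $\Phi^{(t+1)} \leq \tr(W^{(t)} \exp(-\varepsilon M^{(t)}))$.

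Next I would linearize the single-step factor. Since $0 \leq M^{(t)} \leq \I_\X$, convexity of $x \mapsto e^{-\varepsilon x}$ on $[0,1]$ yields the scalar chord bound $e^{-\varepsilon x} \leq 1 - (1-e^{-\varepsilon})x$, hence the operator inequality $\exp(-\varepsilon M^{(t)}) \leq \I_\X - (1-e^{-\varepsilon}) M^{(t)}$. Substituting this and using $\rho^{(t)} = W^{(t)}/\Phi^{(t)}$ gives $\Phi^{(t+1)} \leq \Phi^{(t)}\bigl(1 - (1-e^{-\varepsilon})\ip{\rho^{(t)}}{M^{(t)}}\bigr) \leq \Phi^{(t)}\exp(-(1-e^{-\varepsilon})\ip{\rho^{(t)}}{M^{(t)}})$, the last step being $1+z \leq e^z$. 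Telescoping over $t = 1,\ldots,T$ and recalling $\Phi^{(1)} = D$ produces $\Phi^{(T+1)} \leq D \exp(-(1-e^{-\varepsilon}) \sum_{t=1}^T \ip{\rho^{(t)}}{M^{(t)}})$.

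Finally I would chain the upper and lower bounds on $\Phi^{(T+1)}$, take logarithms, and rearrange to get $(1-e^{-\varepsilon})\sum_{t=1}^T \ip{\rho^{(t)}}{M^{(t)}} \leq \ln D + \varepsilon \ip{\rho^{\ast}}{\sum_{t=1}^T M^{(t)}}$. To reach the stated form I would invoke the elementary estimate $1 - e^{-\varepsilon} \geq \varepsilon(1-\varepsilon)$ (valid for $\varepsilon \geq 0$, since the difference has value and derivative $0$ at $\varepsilon=0$ and is convex there), together with the nonnegativity of $\sum_t \ip{\rho^{(t)}}{M^{(t)}}$ (each summand is nonnegative because $\rho^{(t)}, M^{(t)} \geq 0$), and then divide through by $\varepsilon$. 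Everything beyond the Golden--Thompson step is scalar calculus and bookkeeping, so I expect that inequality to be the main obstacle.
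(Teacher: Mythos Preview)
Your argument is correct and is precisely the standard potential-function proof via Golden--Thompson that appears in Kale's thesis~\cite{Kale07}, which is exactly where the paper defers for this result (the paper does not give its own proof, merely citing~\cite{Kale07} and~\cite{Wu10}). There is nothing to add.
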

The proof can be found in Kale's thesis~\cite{Kale07} or the
appendix of~\cite{Wu10}. We will then discuss how this method can
used to solve the feasibility problem and equilibrium value
following the way in Kale's thesis~\cite{Kale07} with more details.

In order to solve any feasibility problem of general form, the
primal-dual method will generate a series of candidate solutions
$X^{(1)}, X^{(2)}, \cdots, X^{(T)}$ for $T$ rounds. For any
$X^{(t)}$ in the round $t$, we require an oracle $\mathcal{O}_1$ to
solve the following problem
\begin{equation} \label{eqn:sdp_oracle_work}
 \text{find } Y^{(t)} \text{ s.t. }
 \ip{\Psi^\ast(Y^{(t)})-A}{X^{(t)}} \geq 0, \ip{B}{Y^{(t)}}\leq c,
 Y^{(t)} \in \pos{\Y}
\end{equation}
The oracle $\mathcal{O}_1$ will return such a $Y^{(t)}$ or claim
such a $Y^{(t)}$ does not exist. If such a $Y^{(t)}$ exists, the
primal-dual method will generate the $X^{(t+1)}$ via the
multiplicative weight update method by choosing $M^{(t)}$ to be a
renormalized version of $\Psi^\ast(Y^{(t)})-A$. Otherwise, it stops
to claim the rescaled $X^{(t)}$ is feasible to the primal problem
and $\ip{A}{X^{(t)}}\geq c$. If the method does not stop for $T$
rounds, the multiplicative weight update method will generate an
approximate dual feasible solution $Y$. Under certain conditions,
such a $Y$ can be converted into exact dual feasible solution
$\tilde{Y}$ such that $\ip{B}{\tilde{Y}} \leq (1+\epsilon)c$. Due to
the duality of SDPs, this implies that $\alpha\leq \beta\leq
(1+\epsilon)c$.

In addition to the convertibility from approximate to exact
feasibility, another difficulty in applying this generic method is
the design of the oracle $\mathcal{O}_1$ . Efficient solution to the
oracle $\mathcal{O}_1$ is necessary to guarantee an efficient
algorithm for the feasibility problem. We will refer this
requirement as the \emph{efficient solvability}. In addition, we
need the spectrum of $\Psi^\ast(Y^{(t)})-A$ of the oracle
$\mathcal{O}_1$ is bounded within a small range. We will refer this
requirement as \emph{width-boundness}.

The generic framework to calculate the equilibrium value is
different in the sense of the design of the oracle and the use of
Theorem~\ref{thm:mmw_simple}. Consider the value $\lambda$,
\[
  \lambda=  \min_{x \in X} \max_{y \in Y} f(x,y)= \max_{y \in Y} \min_{x \in X} f(x,y)
\]
for some \emph{convex-concave} function over $X \times Y$ where $X,
Y$ are convex compact sets. By convex-concave, we mean

\begin{definition}
A function $f$ on $X \times Y$ is \emph{convex-concave} if for every
$y \in Y$ the function $\forall x \in X, f_y(x)\triangleq f(x,y)$ is
convex on $X$ and for every $x \in X$ the function $\forall y \in Y,
f_x(y)\triangleq f(x,y)$ is concave on $Y$.
\end{definition}

Consider the case when $X$ is the set of density operators up to
some factor. Again, we will generate a series of $x^{(1)},x^{(2)},
\cdots, x^{(T)} \in X$ for $T$ rounds. For each $x^{(t)}$ in the
round $t$, we require another oracle $\mathcal{O}_2$ to find
approximate solution, denoted by $y^{(t)}$, to the following
optimization problem,
\begin{equation} \label{eqn:equilibrium_oracle}
 \max_{y \in Y} f(x^{(t)}, y)
\end{equation}
 Then $x^{(t+1)}$
will be generated via the multiplicative weight update method. After
$T$ rounds, we can claim $\frac{1}{T}\sum_{t=1}^T
f(x^{(t)},y^{(t)})$ is the approximate value for $\lambda$.
Similarly,  the oracle $\mathcal{O}_2$ is required to be
\emph{efficient solvable} and \emph{width-bounded}. By the efficient
solvability, we mean the oracle $\mathcal{O}_2$ can be solved
efficiently. By the width-boundness, we mean the
$\mathcal{L}_\infty$ norm of $y^{(t)}$ is bounded in some sense.

%
%
We will conclude this section with the proof of
Theorem~\ref{thm:mmw_generic_equilibrium}. \\ \noindent
\textbf{Proof of Theorem~\ref{thm:mmw_generic_equilibrium}}
\begin{proof}
First note that $\snorm{N^{(t)}}_\infty \leq r$ for any $1 \leq t
\leq T$. Thus,
\[
0 \leq M^{(t)}=(N^{(t)}+r\I_{\X})/2r \leq \I_\X
\]

Then it is easy to see this is a typical multiplicative weights
update method. Due to the Theorem~\ref{thm:mmw_simple}, we have:

\begin{equation}\label{eqn:mmw_main_apply_1}
 (1-\varepsilon)\sum_{\tau=1}^T \ip{\rho^{(\tau)}}{M^{(\tau)}} \leq
 \ip{\rho^{\ast}}{\sum_{\tau=1}^T M^{(\tau)}}+\frac{\ln D}{\varepsilon}
\end{equation}
for any density operator $\rho^{\ast} \in \density{\X}$. Substitute
$M^{(t)}=(N^{(t)}+r\I_{\X})/2r$ into
\ceq{\ref{eqn:mmw_main_apply_1}} and divide both side by $T$, note
that $\ip{\rho^{(t)}}{M^{(t)}}\leq 1$, then we have
\begin{equation}\label{eqn:mmw_stepone}
\frac{1}{T} \sum_{\tau=1}^{T} \ip{\rho^{(\tau)}}{N^{(\tau)}} \leq
\frac{1}{T} \ip{\rho^{\ast}}{\sum_{\tau=1}^T N^{(\tau)}} +
2r\varepsilon +\frac{2r\ln D}{\varepsilon T}
\end{equation}
By choosing $\varepsilon=\frac{\delta}{4r}$ and $T=\ceil{\frac{16r^2
\ln D}{\delta^2}}$, we have
\[
\frac{1}{T} \sum_{\tau=1}^{T} \ip{\rho^{(\tau)}}{N^{(\tau)}} \leq
\frac{1}{T} \ip{\rho^{\ast}}{\sum_{\tau=1}^T N^{(\tau)}} + \delta
\]
According to the definition of $N^{(t)}, 1\leq t \leq T$, we have
\begin{equation}\label{eqn:mmw_step_two}
\frac{1}{T} \sum_{\tau=1}^{T} \ip{S(\rho^{(t)})}{\Pi^{(t)}} \leq
\frac{1}{T} \sum_{\tau=1}^{T} \ip{S(\rho^\ast)}{\Pi^{(t)}} + \delta
\end{equation}

Choose $(\equil{\rho}, \equil{\Pi})$ to be the \emph{equilibrium
point} and substitute $\equil{\rho}$ into
\ceq{\ref{eqn:mmw_step_two}}, we have
\begin{equation}\label{eqn:mmw_stepthree}
\frac{1}{T} \sum_{\tau=1}^{T} \ip{S(\rho^{(t)})}{\Pi^{(t)}} \leq
\frac{1}{T} \sum_{\tau=1}^T \ip{S(\equil{\rho})}{\Pi^{(t)}} + \delta
\leq \equil{\lambda}+\delta
\end{equation}
where the last inequality comes from
\ceq{\ref{eqn:negative_equilibrium}}.

Since $\bar{\rho}=\frac{1}{T} \sum_{\tau=1}^T \rho^{(\tau)}$ and by
definition of $\bar{\Pi}$, we have
\[
  \ip{\bar{\rho}}{\bar{\Pi}}= \frac{1}{T} \sum_{\tau=1}^T
  \ip{\rho^{(\tau)}}{\bar{\Pi}} \leq \frac{1}{T} \sum_{\tau=1}^T
  \ip{\rho^{(\tau)}}{\Pi^{(\tau)}} \leq \equil{\lambda} + \delta
\]
where the first inequality is due to the definition of $\Pi^{(t)}$
and the second inequality comes from \ceq{\ref{eqn:mmw_step_two}}.
On the other side, $\ip{\bar{\rho}}{\bar{\Pi}} \geq \equil{\lambda}$
by the definition of $\bar{\Pi}$ and
\ceq{\ref{eqn:negative_equilibrium}}. Thus, we have
\[
 \equil{\lambda}\leq \ip{\bar{\rho}}{\bar{\Pi}} \leq
 \equil{\lambda}+ \delta
\]
Finally we need to show that this algorithm can actually run in
\class{NC} if $\delta_2=O(1/polylog(|x|))$. Consider every iteration
of the algorithm. The only operations involved are the fundamental
operation of matrices, the singular value decomposition and the
exponentials of matrices. Due to the
Fact~\ref{fact:fundamentals},\ref{fact:matrix_exponential},\ref{fact:svd},
they all can be computed in \class{NC} and the circuits for the
computation can be easily composed. Since there are only
$T=\ceil{\frac{16r^2 \ln N}{\delta^2}}$ iterations and
$r=O(polylog(|x|)), \delta=O(1/polylog(|x|))$, thus there will be at
most poly-logarithm iterations respect to the input size. Therefore,
the whole circuit will be \class{NC}.
\end{proof}

\section{QMAM case} \label{sec:qmam}

In this section, we will demonstrate how the
Framework~\ref{framework_1} can be applied to the SDP of QMAM. Due
to space limit, we will directly describe the SDP used
in~\cite{JainJUW09} and denote it by SDP (II).

\begin{center}
  \begin{minipage}{2in}
    \centerline{\underline{SDP Problem}}\vspace{-7mm}
    \begin{align*}
      \text{maximize:}\quad & \ip{R}{\rho}\\
      \text{subject to:}\quad & \tr_{\Y}(\rho) \leq \frac{1}{2}\I_\A \otimes \sigma,\\
      & \rho \in \density{\A \otimes \X \otimes \Y}, \sigma \in
      \density{\X}
    \end{align*}
  \end{minipage}
  \hspace*{12mm}
  \begin{minipage}{2in}
    \centerline{\underline{Feasibility Problem}}\vspace{-7mm}
    \begin{align*}
      \text{ask whether:}\quad & \ip{R}{\rho} \geq c\\
      \text{subject to:}\quad & \tr_{\Y}(\rho) \leq \frac{1}{2}\I_\A \otimes \sigma,\\
      & \rho \in \density{\A \otimes \X \otimes \Y}, \sigma \in
      \density{\X}
    \end{align*}
  \end{minipage}
\end{center}
where $R$ ($0\leq R\leq \I_\X$) is a POVM measurement and the space
$\A$ is of dimension 2. We need to design an algorithm to
distinguish between the following two promises. Let $\alpha$ be the
optimum value of the SDP (II).

\begin{definition} \label{def:QMAM}
Any language $L$ is inside \class{QMAM} if and only if
\begin{itemize}
  \item If $x \in L$, $\alpha\geq c(|x|)$.
  \item If $x \notin L$, $\alpha \leq s(|x|)$.
\end{itemize}
where $c(|x|)-s(|x|)=\Omega(1/poly(|x|))$.
\end{definition}

Following the Framework~\ref{framework_1}, we consider the
feasibility problem above. Precisely, we define

\begin{equation} \label{eqn:h_1}
 f_2(\{\rho,\sigma\}, \Pi)=\ip{\left(
                    \begin{array}{cc}
                       c- \ip{R}{\rho}  &  \\
                        & \tr_{\Y}(\rho)-\frac{1}{2}\I_\A \otimes \sigma \\
                    \end{array}
                  \right)}{\Pi}
\end{equation}
where $\{\rho,\sigma\} \in T_1= \density{\A \otimes \X \otimes
\Y}\times \density{\X}$ and $\Pi \in T_2=\{\Pi: 0\leq \Pi \leq
\I_{\A \otimes \X \oplus \mathbb{C}}\}$. Let $\equil{\lambda_2}$ be
the equilibrium value of function $f_2$, namely,
\[
  \equil{\lambda_2} =\min_{\{\rho,\sigma\} \in T_1} \max_{\Pi \in T_2} f_2(\{\rho,\sigma\}, \Pi) =  \max_{\Pi \in T_2} \min_{\{\rho,\sigma\} \in T_1}f_2(\{\rho,\sigma\}, \Pi)
\]
Base on the Theorem~\ref{thm:equilibrum_feasibility}, the value of
$\equil{\lambda_1}$ will imply whether the original problem is
feasible. In order to tell the two promises in
Definition~\ref{def:QMAM}, we will choose the guess value
$c=\frac{1}{2}(c(|x|)+s(|x|))$.

\begin{lemma} \label{lemma:promise_gap_qmam}
Given the two promises in Definition~\ref{def:QMAM}, we have
\begin{itemize}
  \item If $x \in L$, then $\equil{\lambda_2} \leq 0$.
  \item If $x \notin L$, then $\equil{\lambda_2} \geq \frac{1}{8}\Delta^2$.
\end{itemize}
where $\Delta=c(|x|)-s(|x|)$.
\end{lemma}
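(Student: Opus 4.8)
The plan is to first evaluate the inner maximum over $\Pi\in T_2$ in closed form, and then treat the two promise cases separately, with soundness carrying all of the real work. Because the cost matrix in \ceq{\ref{eqn:h_1}} is block diagonal, with a scalar block $c-\ip{R}{\rho}$ and an $\A\otimes\X$ block $M:=\tr_\Y(\rho)-\tfrac12\I_\A\otimes\sigma$, the quantity $\ip{S}{\Pi}$ depends on $\Pi\in T_2$ only through its two diagonal blocks, a scalar $p\in[0,1]$ and an operator $P$ with $0\le P\le\I_{\A\otimes\X}$, and these decouple. Hence $\max_{\Pi}f_2=\max\{c-\ip{R}{\rho},0\}+\max_{0\le P\le\I}\ip{M}{P}$. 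Since $\dim\A=2$ and $\tr\sigma=1$, the block $M$ is traceless, so Lemma~\ref{lemma:trace_norm} converts the second term into $\tfrac12\snorm{M}_1$, giving
\[
\max_{\Pi\in T_2}f_2(\{\rho,\sigma\},\Pi)=\max\{c-\ip{R}{\rho},0\}+\tfrac12\snorm{\tr_\Y(\rho)-\tfrac12\I_\A\otimes\sigma}_1 .
\]
Both summands are nonnegative, so $\equil{\lambda_2}\ge0$ unconditionally, and computing $\equil{\lambda_2}$ reduces to minimizing this expression over $\{\rho,\sigma\}\in T_1$.

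For the completeness case ($x\in L$) I would substitute an optimal feasible pair $(\rho^\star,\sigma^\star)$ of SDP (II). By Definition~\ref{def:QMAM} it attains $\ip{R}{\rho^\star}=\alpha\ge c(|x|)>c$, so the first term vanishes; and feasibility forces $M\le0$, whence the traceless negative semidefinite operator $M$ is $0$ and the trace-norm term vanishes as well. Thus the right-hand side equals $0$ at $(\rho^\star,\sigma^\star)$, so $\equil{\lambda_2}\le0$.

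The soundness case ($x\notin L$) is where I expect the main obstacle, since it requires the conversion from approximate to exact feasibility. Fix an arbitrary $\{\rho,\sigma\}\in T_1$ and set $s=\tfrac12\snorm{\tr_\Y(\rho)-\tfrac12\I_\A\otimes\sigma}_1$. Regarding $\A\otimes\X$ as the retained system and $\Y$ as the purifying system, I would invoke Lemma~\ref{lemma:purification_fidelity} to obtain $\tilde\rho\in\density{\A\otimes\X\otimes\Y}$ with $\tr_\Y(\tilde\rho)=\tfrac12\I_\A\otimes\sigma$ and $\fid(\tr_\Y(\rho),\tfrac12\I_\A\otimes\sigma)=\fid(\rho,\tilde\rho)$; keeping $\sigma$ unchanged makes $(\tilde\rho,\sigma)$ \emph{exactly} feasible, so $\ip{R}{\tilde\rho}\le\alpha\le s(|x|)$. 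Writing $t=\tfrac12\snorm{\tilde\rho-\rho}_1$, Lemma~\ref{lemma:trace_bound} gives $t\le\sqrt{2s-s^2}$, while $0\le R\le\I$ together with Lemma~\ref{lemma:trace_norm} gives $\ip{R}{\rho}\le\ip{R}{\tilde\rho}+t\le s(|x|)+\sqrt{2s-s^2}$. As $c-s(|x|)=\tfrac12\Delta$, this yields $\max\{c-\ip{R}{\rho},0\}\ge\tfrac{\Delta}{2}-\sqrt{2s-s^2}$, and therefore
\[
\max_{\Pi\in T_2}f_2\ge s+\max\Bigl\{\tfrac{\Delta}{2}-\sqrt{2s-s^2},\,0\Bigr\}.
\]
It then remains to minimize the right-hand side over $s\in[0,1]$: the minimizer sits at the breakpoint $\sqrt{2s-s^2}=\tfrac{\Delta}{2}$, i.e. $s=1-\sqrt{1-\Delta^2/4}$, which gives $\equil{\lambda_2}\ge 1-\sqrt{1-\Delta^2/4}\ge\tfrac18\Delta^2$ (using $1-\sqrt{1-x}\ge x/2$). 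The two delicate points are the choice to freeze $\sigma$ so that Lemma~\ref{lemma:purification_fidelity} applies verbatim on the $\A\otimes\X$ marginal, and the one-variable optimization over $s$ that pins down the constant $\tfrac18$.
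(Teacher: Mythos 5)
Your proof is correct and follows essentially the same route as the paper's: both reduce the inner maximization to $\max\{c-\ip{R}{\rho},0\}+\tfrac12\snorm{\tr_\Y(\rho)-\tfrac12\I_\A\otimes\sigma}_1$ via Lemma~\ref{lemma:trace_norm}, construct an exactly feasible $\tilde\rho$ with Lemma~\ref{lemma:purification_fidelity}, and combine exact feasibility ($\ip{R}{\tilde\rho}\le s(|x|) = c - \tfrac12\Delta$) with Lemma~\ref{lemma:trace_bound} to extract a gap of $1-\sqrt{1-\Delta^2/4}\ge\tfrac18\Delta^2$. The only differences are bookkeeping: you lower-bound $\max_{\Pi}f_2$ at an arbitrary $(\rho,\sigma)$ and run a one-variable optimization over $s$ with a breakpoint, while the paper evaluates at the equilibrium point and splits into the cases $t\le\tfrac12\Delta$ and $t\ge\tfrac12\Delta$; both give the same constant.
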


\begin{proof}
\begin{itemize}
  \item If $x \in L$, then there exists a $\rho \in \density{\A
  \otimes \X \otimes \Y}, \sigma \in \density{\X}$ such that
  $\ip{\rho}{R} \geq c$ and $\tr_{\Y}(\rho) \leq \frac{1}{2}\I_\A \otimes
  \sigma$. This implies $\equil{\lambda_2} \leq 0$.

  \item Otherwise, let $(\{\equil{\rho},\equil{\sigma}\}, \equil{\Pi})$ be the
  equilibrium point. Due to Lemma~\ref{lemma:trace_norm}, we have
  \begin{equation} \label{eqn:qmam_lemma1}
    \equil{\lambda_2}=f_2(\{\equil{\rho},\equil{\sigma}\}, \equil{\Pi})=\max \{ c-\ip{R}{\equil{\rho}},0\}
    +\frac{1}{2}\snorm{\tr_{\Y}(\equil{\rho})-\frac{1}{2}\I_\A \otimes
  \sigma}_1
  \end{equation}
  By Lemma~\ref{lemma:purification_fidelity}, there exists a $\tilde{\rho} \in \density{\A \otimes \X \otimes \Y}$
  such that $\fid(\frac{1}{2}\I_\A \otimes
  \sigma,\tr_\Y(\equil{\rho}))=\fid(\tilde{\rho}, \equil{\rho})$ and
  $\tr_\Y(\tilde{\rho})=\frac{1}{2}\I_\A \otimes
  \sigma$.
  Let $s=\frac{1}{2}\snorm{\tr_{\Y}(\equil{\rho})-\frac{1}{2}\I_\A \otimes
  \sigma}_1$ and
  $t=\frac{1}{2}\snorm{\tilde{\rho}-\equil{\rho}}_1$.
  Then if $t \leq \frac{1}{2} \Delta$, we have
  \begin{eqnarray*}
   \equil{\lambda_2} & \geq & c- \ip{R}{\tilde{\rho}}
   +\ip{R}{\tilde{\rho}-\equil{\rho}} + s \\
   & \geq & \frac{1}{2} \Delta -t +s \\
   & \geq & \frac{1}{2} \Delta -t +1 -\sqrt{1-t^2} \\
   & \geq & \frac{1}{2} \Delta -\frac{1}{2}\Delta +1
   -\sqrt{1-\frac{1}{4}\Delta^2} \geq \frac{1}{8}\Delta^2
  \end{eqnarray*}
where the first inequality is due to \ceq{\ref{eqn:qmam_lemma1}},
the second inequality comes from Lemma~\ref{lemma:trace_norm} and
the third inequality comes from Lemma~\ref{lemma:trace_bound}. The
last inequality is because $t+\sqrt{1-t^2}$ is increasing when
$0<t<\frac{1}{2}$ and $1-\sqrt{1-x^2} \geq \frac{1}{2}x^2$ for any
$0<x<1$. On the other side, if $t \geq \frac{1}{2} \Delta$, by
  \ceq{\ref{eqn:qmam_lemma1}},
\[
\equil{\lambda_2} \geq s \geq 1-\sqrt{1-t^2} \geq \frac{1}{2}t^2
\geq \frac{1}{8} \Delta^2
\]
Finally, we have $\equil{\lambda_2}\geq \frac{1}{8}\Delta^2$ in this
case.
\end{itemize}
\end{proof}

The only part left is to prove that we can calculate the equilibrium
value $\equil{\lambda_2}$ to high precision in \class{NC}. As the
readers might notice, the set $T_1$ is no longer a simple set of
density operators but a cross product of two sets of density
operators. However, we are still able to use a modified version of
the algorithm in Figure~\ref{fig:mmw_generic} to solve the problem.

\begin{figure}[t]
\noindent\hrulefill
\begin{mylist}{8mm}
\item[1.]
Let $\varepsilon=\frac{\delta}{4r}$ and $T=\ceil{\frac{16r^2 \ln
D}{\delta^2}}$. Also let $W^{(1)}=\I_{\A\otimes \X \otimes \Y}$,
$V^{(1)}=\I_{\X}$.

\item[2.]
Repeat for each $t = 1,\ldots,T$:

\begin{mylist}{8mm}
\item[(a)]
Let $\rho^{(t)}=W^{(t)}/\tr{W^{(t)}},
\sigma^{(t)}=V^{(t)}/\tr{V^{(t)}}$ and let $\Pi^{(t)}$ be the
projection onto the positive eigenspace of
$S_1(\rho^{(t)})+S_2(\sigma^{(t)})$.
\item[(b)]
Let $M_1^{(t)}=(N_1(\Pi^{(t)})+ r \I_{\A \otimes \X \otimes
\Y})/2r$, $M_2^{(t)}=(N_2(\Pi^{(t)})+r \I_\X)/2r$ and update the
weight matrix as follows:
\begin{eqnarray*}
   W^{(t+1)} & = & exp(-\varepsilon \sum_{\tau=1}^t M_1^{(\tau)}) \\
   V^{(t+1)} & = & exp(-\varepsilon \sum_{\tau=1}^t M_2^{(\tau)})
\end{eqnarray*}
\end{mylist}

\item[3.]
Return $\frac{1}{T}\sum_{\tau=1}^T
\ip{S_1(\rho^{(t)})+S_2(\sigma^{(t)})}{\Pi^{(t)}} $  as the
approximate equilibrium value of $\equil{\lambda_2}$.
\end{mylist}
\noindent\hrulefill \caption{An algorithm that computes the
approximate value of $\equil{\lambda_2}$ to precision $\delta$. }
\label{fig:mmw_qmam}
\end{figure}

Precisely, we claim the algorithm in Figure~\ref{fig:mmw_qmam} will
be able to calculate $\equil{\lambda_2}$ to precision $\delta$ in
\class{NC}. The proof is almost the same as the proof for
Theorem~\ref{thm:mmw_generic_equilibrium}. The only difference is
that we need to update $\rho^{(t)}, \sigma^{(t)}$ independently and
get two inequalities from each update. Then we combine them to get
the final result. This can be done because
\[
 f_2(\{\rho,\sigma\}, \Pi)=\ip{S_1(\rho)}{\Pi}+ \ip{S_2(\sigma)}{\Pi}
\]
where
\[
S_1(\rho)=\left(
                    \begin{array}{cc}
                       c- \ip{R}{\rho}  &  0\\
                     0   & \tr_{\Y}(\rho) \\
                    \end{array}
                  \right)  \text{ and } S_2(\sigma)=\left(
                                       \begin{array}{cc}
                                       0    & 0 \\
                                        0   & -\frac{1}{2}\I_\A \otimes \sigma \\
                                       \end{array}
                                     \right)
\]

Let $\Pi=\left(
           \begin{array}{cc}
             p &  \\
               & P \\
           \end{array}
         \right)$ again, we can choose
\[
  N_1(\Pi)= -pR+P\otimes \I_\Y+pc \I_{\A \otimes \X \otimes \Y}
  \text{  and } N_2(\Pi)=-\frac{1}{2} \tr_\A P
\]
It is easy to verify that $\snorm{N_1(\Pi)}_\infty,
\snorm{N_2(\Pi)}_\infty$ is bounded by 3. Thus given the gap of the
equilibrium value $\equil{\lambda_2}$ between two promises in
Lemma~\ref{lemma:promise_gap_qmam}, we can distinguish them in
\class{NC(poly)} namely \class{PSPACE}.

\section{Precision Issue} \label{sec:precision_issue}

The discussions on precision issue about the parallel
implementations in our paper are quite similar to the arguments used
in~\cite{JainJUW09,JainUW09,JainW09,Wu10}. Again, we will make use
the three facts in Appendix~\ref{sec:NC_parellel} and truncate the
computation to sufficient precision for each step. It should be
noticed that because of our new framework we will only use three
types of operations of matrices (namely, the three facts in
Appendix~\ref{sec:NC_parellel}). The analysis of the precision issue
in our paper then should be easier than the one in~\cite{JainJUW09}
since the latter one involves other types of operations, like the
inversion of matrices.

\end{document}